\documentclass[a4paper,10pt]{article}
\usepackage[utf8x]{inputenc}
\usepackage{amsmath,amsfonts,amssymb}

\newtheorem{theorem}{Theorem}
\newtheorem{claim}[theorem]{Claim}
\newenvironment{proof}[1][Proof]{\noindent\textbf{#1:} }{\ \rule{0.5em}{0.5em}}

\title{Partition distances}
\author{ Giovanni Rossi\footnote{\textsc{contact}: \texttt{giorossi@cs.unibo.it} and/or \texttt{roxyjean@gmail.com}}\\
{\small Department of Computer Science, University of Bologna}\\
{\small Mura Anteo Zamboni 7, Bologna 40126, Italy} }

\begin{document}

\maketitle

\begin{abstract}
Alternative novel measures of the distance between any two partitions of a $n$-set are proposed and compared, together 
with a main existing one, namely \textit{partition-distance} $D(\cdot,\cdot)$. The comparison achieves by checking their
restriction to modular elements of the partition lattice, as well as in terms of suitable classifiers. Two of the new
measures obtain through the \textit{size}, a function mapping every partition into the number of \textit{atoms} finer than
that partition. One of these size-based distances extends to geometric lattices the traditional Hamming distance between
subsets, when these latter are regarded as hypercube vertexes or binary $n$-vectors. After carefully framing the environment,
a main comparison finally results from the following bounding problem: for every value $k$, with $0<k<n$, of
partition-distance $D(\cdot ,\cdot)$, determine the minimum and maximum of the \textit{indicator-Hamming} distance
$\delta^{IH}(P,Q)$ proposed here over all pairs of partitions $P,Q$ such that $D(P,Q)=k$.\\
\textbf{\textsl{Key words}}: partition lattice, modular element, distance measure, Hamming distance, sub- and super-modular partition function, clustering.\\
\textbf{\textsl{MSC 2010}} : 03C13, 03G10, 05A18, 06B15, 06C10, 06D05, 11B73.
\end{abstract}

%%%%  SECTION 1    ::::    INTRO

\section{Introduction}
Over the last decade, considerable interest has been attracted on measuring the distance between partitions (as well as between
and/or within collections of partitions). The issue arises, in general, when making similarity comparisons between clusterings
\cite{SoberLook06,D'yachkov+++06,Gusfield02,Kono++05,MarinaMeila08,CentralPartition04,Yu+++01}.

The problem of quantifying the distance between partitions of a finite set is here approached with a specific combinatorial
target, in that the proposed measure aims at keeping into account the coarsening, meet and join relations of the partition
lattice exactly in the same way as the traditional Hamming distance between subsets does with inclusion, intersection and
union. Put it differently, the objective is reproducing the symmetric difference between subsets when measuring distances
between partitions.

Despite the analysis adopts such a focused and somehow theoretical perspective, still the outcome is a variety of novel
partition distance measures, each possibly meeting an alternative application need. In particular, the measure that
factually translates the traditional Hamming distance between subsets in terms of partitions appears to evaluate differences
in a very accurate and granular manner.

Meet, join and order relations of the subset and partition lattices, as well as their distinctive features and what
renders modular an element in a lattice, are described in \cite{Aigner79,Stanley71,Stern99}. In particular,
modular elements of the partition lattice are extensively dealt with in the sequel. Also, partitions are mostly treated as collections of
atoms of the partition lattice, and these atoms are modular. More generally, the approach leads to work with linear dependence
\cite{Whitney35}, commonly arising in geometric lattices. In a way, the indicator-Hamming distance measure
proposed below fully exploits such a linear dependence for evaluating differences between partitions.

The next section details two simple ways of translating the Hamming distance between subsets in terms of partitions: one is
through the symmetric difference while the other is through the rank. In section 3 they are compared with
\textit{partition-distance} proposed in \cite{Gusfield02} by checking their behavior over pairs of modular partitions. In
section 4 these three measures are characterized in terms of suitable classifiers (applying to any complemented lattice).
Section 5 focuses on atoms of the partition lattice, populating the first level of the Hasse diagram. The remainder of the
paper looks at partitions precisely in terms of their representations as a join of atoms. Linear dependence means that the
generic partition has many such representations. The size of a partition is the number of atoms finer than that partition or,
equivalently, the cardinality of the largest representation of that partition as a join of atoms \cite{RoxyJean04}. It is
shown to be a strictly monotone and super-modular partition function. Section 6 provides and characterizes two novel partition
distance measures: one is \textit{size-based}, using the size just like the rank-based distance (from section 2) uses the rank, while the other
is named \textit{indicator-Hamming} and proposed as the faithful translation of the Hamming distance between subsets. In
fact, it measures the distance between any two partitions by counting the number of atoms finer than either one but not both.
Section 7 details the features displayed by this IH distance measure by bounding its maximum and minimum for every
value of partition-distance. Essentially, apart from providing the sought combinatorial congruence,
the former distance is very precise and granular at quantifying differences between partitions: its range is large (much
larger than all those of other distances appearing here), and this is very useful for measuring distances between partitions
from the mostly populated levels of the Hasse diagram, where more distinct types of differences between partitions actually exist.
Final remarks are contained in sections 8.

%%%%    SECTION 2    ::::    SYMMETRIC DIFFERENCE AND THE RANK

\section{Symmetric difference and rank}
For a finite set $N=\{1,\ldots ,n\}$ (or $[n]$), let $(2^N,\cap ,\cup)$ and $(\mathcal P^N,\wedge, \vee)$ denote the corresponding
subset and partition lattices, with inclusion $\supseteq$ and coarsening $\geqslant$ as order relations, respectively. Both are
atomic, and the fomer is distributive while the latter is geometric indecomposable \cite{Aigner79,Stern99}.

The distance between elements of a ordered set is to be measured in terms of the order relation. On the other hand, measures of
the difference between elements of a generic set are commonly referred to as \textit{Hamming distances} when elements
are firstly represented as arrays, and next the difference between any two of them simply reduces to counting the number of
entries where their two array representations differ. In discrete settings, measuring distances seems to naturally reduce to counting.

The Hamming distance $d(A,B)$ between any two subsets $A,B\in 2^N$ is
\begin{equation}
d(A,B)=|A\Delta B|=|A\backslash B|+|B\backslash A|=r(A\cup B)-r(A\cap B)\text ,
\end{equation}
$r:2^N\rightarrow\mathbb Z_+$ being the rank function: $r(A)=|A|$ for all $A\in2^N$. In words, $d(\cdot,\cdot)$ counts how many $i\in N$ are
included in either $A$ or else $B$, but not in both. Note that such elements $i\in N$ are the atoms $\{i\}\in2^N$ of the subset
lattice. This is a Hamming distance in that subsets $A\in 2^N$ are firstly represented as binary vectors through their characteristic
function $\chi_A:N\rightarrow\{0,1\}$ defined by $\chi_A(i)=1$ if $i\in A$ and $\chi_A(i)=0$ if $i\in N\backslash A$, and next the
distance between any two subsets $A,B\in 2^N$ is the number of entries where $\chi_A$ and $\chi_B$ differ. That is, the
cardinality of their symmetric difference $A\Delta B$. 

Any subset $A\in 2^N$ has a unique complement $A^c=N\backslash A$. For all non-empty subsets
$\emptyset\subset A\subseteq N$ and all partitions $P\in\mathcal P^N$, denote by $P^A$ the partition of $A$ induced  by $P$,
and let $\mathcal P^A$ be the sub-lattice of partitions of $A$. Partition-distance
$D:\mathcal P^N\times\mathcal P^N\rightarrow\{0,1,\ldots ,n-1\}$ given by \cite{Gusfield02} is
\begin{equation}
D(P,Q)=\min\{|A^c|:\emptyset\subset A\subseteq N,P^A=Q^A\}\text .
\end{equation}
That is, the minimum number of elements $i\in N$ that must be deleted in order for the two residual induced partitions to
coincide. Also, $D(P,Q)$ \textit{is the minimum number of elements that must be moved between} [or away from] \textit{blocks of
$P$ so that the resulting partition equals $Q$} (see \cite[p. 160]{Gusfield02}). Although there exist Hamming distances between
partitions in the literature \cite{SoberLook06,MarinaMeila08}, partition-distance $D(\cdot,\cdot)$ is not among them, because in
(2) there is no count of non-matched entries in some array representations of $P$ and $Q$.  On the other hand, there are two
immediate ways of paralleling (1) when switching from subsets to partitions. One is treating partitions as special collections of
subsets, while the other is using the rank of the partition lattice just like $r(\cdot)$ appears in (1). These two alternatives
are now briefly detailed.

Partitions may well be looked at as subsets of $2^N$, in that $P\subset 2^N$ or equivalently $P\in 2^{2^N}$ for all
$P\in\mathcal P^N$. Hence, the distance $\delta^{SD}(P,Q)$ between any two partitions $P$ and $Q$ may be measured as the
cardinality
\begin{equation}
\delta^{SD}(P,Q)=|P\Delta Q|=|P\backslash Q|+|Q\backslash P|=|P\cup Q|-|P\cap Q|
\end{equation}
of their \textit{symmetric difference} (SD). That is, the number of distinct $A\in 2^N$ such that either $A\in P$ or else
$A\in Q$ but not both. This distance counts the number of non-matched entries in array representations
$\chi_P,\chi_Q:2^N\rightarrow\{0,1\}$, with $\chi_P(A)=1$ if $A\in P$ and 0 otherwise for all $A\in 2^N$ and similarly for $Q$.

Any lattice has a rank function $r(\cdot)$, mapping elements into their level of the Hasse diagram. For the partition lattice,
$r:\mathcal P^N\rightarrow\mathbb Z_+$ is $r(P)=n-|P|$. Given how the rank of subsets appears in (1) above, a further
\textit{rank-based} (RB) partition distance measure is
\begin{equation}
\delta^{RB}(P,Q)=r(P\vee Q)-r(P\wedge Q)=|P\wedge Q|-|P\vee Q|\text ,
\end{equation}
where $P\wedge Q$ is the coarsest partition finer than both $P,Q$ and $P\vee Q$ is the finest partition coarser than both
$P,Q$. Note that any block $A\in P,Q$ of both partitions is also a block of both $P\wedge Q$ and $P\vee Q$, and vice versa.

These simple attempts to parallel (1) already provide two further partition distance measures to be compared with
partition-distance $D(\cdot,\cdot)$. This is done hereafter firstly in terms of the behavior on modular partitions, and secondly
in terms of some suitable classifiers.

%%%%    SECTION 3    ::::    DISTANCES BETWEEN MODULAR PARTITIONS

\subsection{Distances between modular partitions}
Modular elements and modular pairs (of elements) are very important for comprehending geometric lattices \cite{Aigner79,Stanley71,Stern99},
making it useful to observe the behavior of distance measures over pairs of modular partitions (not to be confused with \textit{modular pairs} of
partitions).

The bottom and top elements of partition lattice $(\mathcal P^N,\wedge,\vee)$ are, respectively, $P_{\bot}=\{\{1\},\ldots ,\{n\}\}$
and $P^{\top}=\{N\}$. Both are modular elements of the lattice. The collection of all modular partitions is
\begin{equation*}
\mathcal P^N_{mod}=\left\{\{A\}\cup P^{A^c}_{\bot}:\emptyset\subset A\subseteq N\right\}\text ,
\end{equation*}
with $P^{\emptyset}=\emptyset$ for all $P\in\mathcal P^N$, and where $\{A\}\cup P^{A^c}_{\bot}$ is the partition with all $i\in A$
in a common block and every $j\in A^c$ in a 1-cardinal block. Note that all the $n$ atoms of $2^N$ (that is, all elements $i\in N$)
collapse into a unique modular partition, which is the bottom one $P_{\bot}$. Hence, $|\mathcal P^N_{mod}|=2^n-n$.

When restricted to $\mathcal P^N_{mod}\times\mathcal P^N_{mod}$, partition-distance $D(\cdot,\cdot)$ above behaves as
follows: $D(P^{\top},P_{\bot})=n-1$, while for $\emptyset\subset A\subset N$
\begin{eqnarray*}
D(\{A\}\cup P^{A^c}_{\bot},P_{\bot})&=&|A|-1\text ,\\
D(\{A\}\cup P^{A^c}_{\bot},P^{\top})&=&|A^c|=n-|A|\text ,\\
D(\{A\}\cup P^{A^c}_{\bot},\{A^c\}\cup P^A_{\bot})&=&|A|-1+|A^c|-1=n-2\text .
\end{eqnarray*}
In general, for $\emptyset\subset A,B\subset N$ and $A\neq B\neq A^c$,
\begin{equation*}
D(\{A\}\cup P^{A^c}_{\bot},\{B\}\cup P^{B^c}_{\bot})=n-|A\cap B|-|(A\cup B)^c|\text .
\end{equation*}
This obtains by firstly determining a largest subset $A'\in 2^N$ where $P$ and $Q$ induce the same partition $P^{A'}=Q^{A'}$,
and next counting the cardinality of its complement. In $(P^{\top},P_{\bot})$ the sought largest subset is any $A'$ such that
$|A'|=1$ (any atom of $2^N$). In $(\{A\}\cup P^{A^c}_{\bot},P_{\bot})$ it is any $A'=A^c\cup i$ for some $i\in A$. In
$(\{A\}\cup P^{A^c}_{\bot},P^{\top})$ it is $A'=A$. In $(\{A\}\cup P^{A^c}_{\bot},\{A^c\}\cup P^A_{\bot})$ it is any $A'=\{i,j\}$
such that $i\in A,j\in A^c$. Finally, for the general case
$(\{A\}\cup P^{A^c}_{\bot},\{B\}\cup P^{B^c}_{\bot})$, these two modular partitions are seen to coincide when restricted to largest
subset $A'=(A\cap B)\cup(A^c\cap B^c)=(A\cap B)\cup(A\cup B)^c\neq\emptyset$ for all $A,B\in 2^N,B\neq A^c$. It may be noted that
$D(\{A\}\cup P^{A^c}_{\bot},\{B\}\cup P^{B^c}_{\bot})=d(A,B)$ as given by (1).   

The restriction of distance $\delta^{SD}$ above to pairs of modular partitions is $\delta^{SD}(P^{\top},P_{\bot})=n+1$ while
$\emptyset\subset A\subset N$ yields
\begin{eqnarray*}
\delta^{SD}(\{A\}\cup P^{A^c}_{\bot},P_{\bot})&=&|A|+1\text ,\\
\delta^{SD}(\{A\}\cup P^{A^c}_{\bot},P^{\top})&=&n-|A|+2\text ,\\
\delta^{SD}(\{A\}\cup P^{A^c}_{\bot},\{A^c\}\cup P^A_{\bot})&=&n+2\text .
\end{eqnarray*}
In general, $\emptyset\subset A,B\subset N$ and $A\neq B\neq A^c$ yield
\begin{eqnarray*}
\delta^{SD}(\{A\}\cup P^{A^c}_{\bot},\{B\}\cup P^{B^c}_{\bot})&=&|A^c|+1+|B^c|+1-2|A^c\cap B^c|\\
&=&2(n+1)-(|A|+|B|)-2|(A\cup B)^c|\text .
\end{eqnarray*}

Concerning $\delta^{RB}$, firstly consider that for $A,B\in 2^N$ the meet and join of the two corresponding modular partitions are
\begin{equation*}
\{A\}\cup P^{A^c}_{\bot}\wedge\{B\}\cup P^{B^c}_{\bot}=\{A\cap B\}\cup P^{(A\cap B)^c}_{\bot}\text ,
\end{equation*}
with possibly $A\cap B=\emptyset$, and
\begin{equation*}
\{A\}\cup P^{A^c}_{\bot}\vee\{B\}\cup P^{B^c}_{\bot}=\left\{\begin{array}{c}\{A\cup B\}\cup P_{\bot}^{(A\cup B)^c}
\text{ if }B\cap A\neq\emptyset\text ,\\
\{A,B\}\cup P^{(A\cup B)^c}_{\bot}\text{ if }A\cap B=\emptyset\text .\end{array}\right.
\end{equation*}
Accordingly, the restriction of distance $\delta^{RB}$ above to pairs of modular partitions is
$\delta^{RB}(P^{\top},P_{\bot})=n-1=D(P^{\top},P_{\bot})$ while $\emptyset\subset A\subset N$ yields
\begin{eqnarray*}
\delta^{RB}(\{A\}\cup P^{A^c}_{\bot},P_{\bot})&=&|A|-1=D(\{A\}\cup P^{A^c}_{\bot},P_{\bot})\text ,\\
\delta^{RB}(\{A\}\cup P^{A^c}_{\bot},P^{\top})&=&n-|A|=D(\{A\}\cup P^{A^c}_{\bot},P^{\top})\text ,\\
\delta^{RB}(\{A\}\cup P^{A^c}_{\bot},\{A^c\}\cup P^A_{\bot})&=&n-2=D(\{A\}\cup P^{A^c}_{\bot},\{A^c\}\cup P^A_{\bot})\text .
\end{eqnarray*}
In general, $\emptyset\subset A,B\subset N$ and $A\neq B\neq A^c$ yield
\begin{equation*}
\delta^{RB}(\{A\}\cup P^{A^c}_{\bot},\{B\}\cup P^{B^c}_{\bot})=
\left\{\begin{array}{c}|A|+|B|-2\text{ if }A\cap B=\emptyset\text{, while}\\
|A\cup B|-|A\cap B|\text{ if }A\cap B\neq\emptyset\text .\end{array}\right .
\end{equation*}

Despite the common range, $\delta^{RB}(\cdot,\cdot)$ and $D(\cdot,\cdot)$ do not coincide even when restricted to modular partitions
(see case $A\cap B\neq\emptyset$ above). Great differences may be checked to arise over pairs of partitions
$P,Q$ where one \textit{covers} the other, denoted $P>^*Q$, meaning $P>Q$ and there is no $P'\in\mathcal P^N$ such that
$P>P'>Q$. For subsets, $A\supset^*B$ when $A=B\cup i$ for some $i\in B^c$.

Perhaps these behaviors enable to figure the functioning of the three distance measures, but still the number
$|\mathcal P^N\backslash\mathcal P^N_{mod}|=\mathcal B_n-2^n+n$ of non-modular partitions is huge for relevant $n$, where
$\mathcal B_n$ is the ($n$-th \textit{Bell}) number of partitions of a $n$-set \cite{Graham++94,Rota64B}. Accordingly, some
general tools for comparison are now provided.

%%%%    SECTION 4    ::::    PARTITION DISTANCE MEASURES: CLASSIFIERS

\section{Partition distance measures: classifiers}
Complementation \cite{Aigner79,Stanley71,Stern99} in the partition and subset lattices acts in very different manners: while every subset
has a unique complement (see above), every partition $P\in\mathcal P^N$ has at least one complement ($n>1$), but non-modular ones have many.
They are all those $P'\in\mathcal P^N$ such that $P\wedge P'=P_{\bot}$ as well as $P\vee P'=P^{\top}$. For every partition $P\in\mathcal P^N$,
let $\mathcal P^N_{P^c}$ contain all its complements. 

A partition distance measure $\delta:\mathcal P^N\times\mathcal P^N\rightarrow\mathbb Z_+$ should satisfy
\begin{itemize}
\item $\delta(P,Q)=0\Leftrightarrow P=Q$ for all $P,Q\in\mathcal P^N$ (\textit{antisymmetry}),
\end{itemize}
while further conditions may be the following:  
\begin{enumerate}
\item $\underset{P,Q\in\mathcal P^N}{\max}\delta(P,Q)=f(n)$ ($f$\textit{-maximality}),
\begin{itemize}
\item $f(n+1)>f(n)$ (strong $f$\textit{-monotonicity}),
\item $f(n+2)+f(n)>2f(n+1)$ for all $n\in\mathbb N$ (strong $f$\textit{-convexity}),
\end{itemize}
\item $\underset{P,Q\in\mathcal P^N}{\max}\delta(P,Q)=\underset{P,Q\in\mathcal P^N_{mod}}{\max}\delta(P,Q)$ \textit{(mod-maximality)},
\item $\underset{P,Q\in\mathcal P^N}{\max}\delta(P,Q)=\delta(P_{\bot},P^{\top})$ ($\bot\top$\textit{-maximality}),
\item $\underset{P,Q\in\mathcal P^N}{\max}\delta(P,Q)=\delta(P,Q)$ for all $P\in\mathcal P^N,Q\in\mathcal P^N_{P^c}$ (\textit{co-maximality}),
\item $\delta(P\wedge Q,P\vee Q)\geq\delta(P,Q)$ for all $P,Q\in\mathcal P^N$ (\textit{super-modularity}),
\item $\delta(P\wedge Q,P\vee Q)\leq\delta(P,Q)$ for all $P,Q\in\mathcal P^N$ (\textit{sub-modularity}),
\item $\delta(P\wedge Q,P\vee Q)=\delta(P,Q)$ for all $P,Q\in\mathcal P^N$ (\textit{modularity}).
\end{enumerate}

The preliminary statement is obvious: there is no distance between any partition and itself as well as, conversely, if there is no
distance between two partitions then they coincide (see also \cite[def. 3]{SoberLook06}).

The first condition states that the maximum distance between two partitions of a $n$-set is a function
$f:\mathbb N\rightarrow\mathbb Z_+$ of $n$ only. Then, antisymmetry entails $f(1)=0$, as there is a unique partition of a
singleton. In addition, the first $f(n+1)-f(n)$ and second $f(n+2)-f(n+1)-(f(n+1)-f(n))$ differences may be both strictly positive.

Conditions 2-4 all select a region of the product lattice $\mathcal P^N\times\mathcal P^N$ where the measure has to surely
attain its maximum, without excluding that such a maximum may be also attained elsewhere. Specifically, condition 2 states that
the maximum distance between any two partitions of a $n$-set is the same as that observed as the maximum distance
between any two modular partitions of the set. Condition 3 requires, in addition, that the pair consisting of the bottom and top
partitions is among the maximizers of the distance. Condition 4 requires, in addition, that any pair consisting of a partition and
one of its complements is among the maximizers of the distance. Hence, each entails the preceding one:
$4\Rightarrow3\Rightarrow 2$.

A main observation for discussing conditions 5-7 is that partition distance measures have to act on pairs $P,Q$ that are
incomparable in terms of coarsening $\geqslant$, that is $P\not\geqslant Q\not\geqslant P$ (hence they are excluded from the
incidence algebra of the partition lattice \cite{Aigner79,Stern99}). In this case, it may be important to know if a distance measure
behaves differently depending on whether the two involved partitions are comparable or not. More precisely, the issue is
comparing distance $\delta(P,Q)$ with the most similar distance between partitions that are comparable, namely
$\delta(P\wedge Q,P\vee Q)$. In the Hasse diagram, the left-right distance between incomparable partitions $P,Q$ is replaced
with the up-down distance between $P\vee Q,P\wedge Q$. In this view, a sub-(super-)modular distance measure translates the
idea that by switching from an incomparable pair to the nearest comparable one the distance decreases (increases). More
simply, a distance measure is \textit{modular} when it deals with both comparable and incomparable pairs exactly in the same
manner, being a maximal sub-modular and minimal super-modular one.

Classifiers 5-7 borrow their names from lattice functions $h:X^N\rightarrow\mathbb R$, taking real values on a
lattice $(X^N,\wedge,\vee)$ with meet $\wedge$, join $\vee$ (and, possibly, built upon some finite set $N$ as above). Such functions are
sub-modular when $h(x\vee y)+h(x\wedge y)\leq h(x)+h(y)$ for all pairs $x,y\in X^N$ of lattice elements, and are key tools in
combinatorial theory and optimization
\cite{Aigner79,SubFunOptimization05,GroetschelLovaszSchrijver1988,Stern99}.
Super-modularity obtains when the inequality is reversed. Lattice functions satisfying both sub- and super-modularity are mostly
referred to as modular (or additive or valuations). The literature may be found generally concerned more with modular set
functions rather than modular partition functions; the reason is simple: the only way a function can be a modular in the partition lattice is
by assigning the same constant value to every partition \cite[exercise 12 (ii), p. 195]{Aigner79}. It must be stressed though, that these names borrowed
from lattice functions are here applied, instead, to distance measures. These latter map pairs of lattice elements, while a function
maps lattice elements. Hence, a modular  partition distance measure is reasonable (as  long as it is not built upon a modular
partition function, see below). 

\subsection{Characterization}
Conditions 1-7 apply to any complemented lattice, and thus straightforwardly allow to classify the Hamming distance 
$d(\cdot,\cdot):2^N\times2^N\rightarrow\{0,1,\ldots ,n\}$ between subsets in (1) above: $d(\cdot,\cdot)$ simply satisfies
all conditions apart from strong $f$-convexity, as $f(n)=n$. In this view, RB partition distance measure $\delta^{RB}(\cdot,\cdot)$
defined by (4) above behaves exactly the same as $d(\cdot,\cdot)$, satisfying all conditions apart from strong $f$-convexity, with
$f(n)=n-1$. Conversely, partition-distance $D(\cdot,\cdot)$ and SD distance $\delta^{SD}(\cdot,\cdot)$ (from (2) and (3) above) only
satisfy certain conditions out of 1-7, and appear substantially different from $\delta^{RB}(\cdot,\cdot)$ (apart from the immediate
check that $D(\cdot,\cdot)$ satisfies $f$-maximality and strong $f$-monotonicity, but not strong $f$-convexity,
as $f(n)=n-1$, like $\delta^{RB}(\cdot,\cdot)$).  

\begin{claim}
Partition-distance $D(\cdot,\cdot)$ given by (2) is super-modular:
\begin{equation*}
D(P\vee Q,P\wedge Q)-D(P,Q)\geq 0\text{ for all }P,Q\in\mathcal P^N\text .
\end{equation*}
\end{claim}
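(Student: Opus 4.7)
The plan is to read super-modularity directly off the definition (2) of partition-distance. Both $D(P,Q)$ and $D(P\vee Q, P\wedge Q)$ are minima of $|A^c|$ taken over the family of non-empty subsets $A\subseteq N$ on which the two arguments induce the same partition. The whole strategy is to show that the family of witnessing subsets for the pair $(P\vee Q, P\wedge Q)$ is contained in the family for $(P,Q)$; minimizing over a smaller family cannot decrease the minimum, and the inequality follows.

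First, I would verify the elementary observation that restriction to any subset $A\subseteq N$ preserves the coarsening order of the partition lattice: if $R,R'\in\mathcal{P}^N$ satisfy $R\geqslant R'$, meaning that every block of $R'$ is contained in some block of $R$, then intersecting all blocks with $A$ preserves this containment, yielding $R^A\geqslant (R')^A$ in $\mathcal{P}^A$. Applied to the chain $P\wedge Q \leqslant P,Q \leqslant P\vee Q$, valid for every pair $P,Q\in\mathcal{P}^N$, this gives the pointwise sandwich
\[
(P\wedge Q)^A \;\leqslant\; P^A,\,Q^A \;\leqslant \;(P\vee Q)^A
\]
for every non-empty $A\subseteq N$.

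The decisive step is then the sandwich argument: whenever $A$ is such that the extremes coincide, $(P\vee Q)^A=(P\wedge Q)^A$, both $P^A$ and $Q^A$, being trapped between two equal partitions of $A$, must equal this common value; in particular $P^A=Q^A$. Hence every $A$ that witnesses equality for the pair $(P\vee Q, P\wedge Q)$ also witnesses equality for $(P,Q)$, and the family over which the left-hand side minimum of $|A^c|$ is taken is contained in the corresponding family on the right, delivering $D(P\vee Q,P\wedge Q)\geq D(P,Q)$.

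There is essentially no obstacle here beyond the two verifications above; the only point warranting care is the monotonicity of restriction, which reduces to the fact that a refinement of partitions of $N$ induces a refinement of the induced partitions of $A$. A simple instance, say $P=\{\{1,2,3\},\{4,5,6\}\}$ and $Q=\{\{1,2,4\},\{3,5,6\}\}$ on $N=[6]$, where $A=\{1,2,5,6\}$ realizes $P^A=Q^A=\{\{1,2\},\{5,6\}\}$ while $(P\vee Q)^A\neq (P\wedge Q)^A$ (since $P\vee Q=P^{\top}$), confirms that the inclusion of witness families is in general strict, which is precisely what makes $D(\cdot,\cdot)$ super-modular rather than modular.
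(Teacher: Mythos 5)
Your proof is correct and follows essentially the same route as the paper's: both rest on the observation that restriction to any $A\subseteq N$ preserves the sandwich $(P\wedge Q)^A\leqslant P^A,Q^A\leqslant(P\vee Q)^A$, so every witness for $(P\vee Q,P\wedge Q)$ is a witness for $(P,Q)$ and the minimum over the smaller family can only be larger. Your explicit verification of the monotonicity of restriction and the worked example merely make explicit what the paper leaves implicit.
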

\begin{proof}
Partition-distance $D(P,Q)$ is $n-|A|$ where $A\in 2^N$ is a largest subset satisfying $P^A=Q^A$, while partition-distance
$D(P\vee Q,P\wedge Q)$ is $n-|B|$ where $B$ is a largest subset satisfying $(P\vee Q)^B=(P\wedge Q)^B$. What remains to
note is $(P\vee Q)^A\geqslant P^A\geqslant(P\wedge Q)^A\leqslant Q^A\leqslant(P\vee Q)^A$ for all $A\in 2^N$. This means
that for every $A\in 2^N$, if $(P\vee Q)^A=(P\wedge Q)^A$, then $P^A=Q^A$.
\end{proof}
\smallskip

For example, let $N=\{1,2,3,4\}$ and consider partitions $P,Q\in\mathcal P^N$ with $P=\{12|34\}$ and $Q=\{13|24\}$, where $|$ separates blocks.
Then, $P\vee Q=P^{\top}$ and $P\wedge Q=P_{\bot}$, and thus $D(1234,1|2|3|4)=3>2=D(12|34,13|24)$. 
\begin{claim}
Distance measure $\delta^{SD}(\cdot,\cdot)$ given by (3) is super-modular:
\begin{equation*}
\delta^{SD}(P\vee Q,P\wedge Q)-\delta^{SD}(P,Q)\geq 0\text{ for all }P,Q\in\mathcal P^N\text .
\end{equation*}
\end{claim}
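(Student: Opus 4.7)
The plan is to unfold $\delta^{SD}$ via the identity $|X \Delta Y| = |X| + |Y| - 2|X \cap Y|$ applied to both sides of the desired inequality, cancel the intersection terms using an observation already recorded just after equation (4), and then reduce what remains to submodularity of the rank function of the geometric lattice $\mathcal{P}^N$.

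First, treating $P, Q \subset 2^N$ as sets of blocks, the claim $\delta^{SD}(P \vee Q, P \wedge Q) \geq \delta^{SD}(P, Q)$ expands to
\begin{equation*}
|P \vee Q| + |P \wedge Q| - 2\bigl|(P \vee Q) \cap (P \wedge Q)\bigr| \;\geq\; |P| + |Q| - 2|P \cap Q|\text .
\end{equation*}

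Second, I would invoke the remark following (4): a subset $A \subseteq N$ is a block of both $P$ and $Q$ if and only if it is a block of both $P \wedge Q$ and $P \vee Q$. This yields the set-theoretic equality $P \cap Q = (P \vee Q) \cap (P \wedge Q)$ of collections of blocks, so the two intersection cardinalities cancel and the claim reduces to
\begin{equation*}
|P \vee Q| + |P \wedge Q| \;\geq\; |P| + |Q|\text .
\end{equation*}

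Third, using $r(R) = n - |R|$, this last inequality is equivalent to $r(P) + r(Q) \geq r(P \vee Q) + r(P \wedge Q)$, the standard submodularity of the rank function in a geometric lattice (which the paper has already flagged $\mathcal{P}^N$ to be).

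The only step calling for genuine justification is the middle one, specifically the backward direction: a block shared by $P \wedge Q$ and $P \vee Q$ must already be a block of both $P$ and $Q$. This follows from the sandwich $P \wedge Q \leqslant P, Q \leqslant P \vee Q$, which forces every intermediate partition to single out the same block. Once that identification of intersections is in hand, the rest is routine counting together with the classical rank-submodularity of geometric lattices.
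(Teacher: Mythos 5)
Your proposal is correct and follows essentially the same route as the paper: both arguments cancel the intersection terms via the observation that a subset is a common block of $P$ and $Q$ exactly when it is a common block of $P\wedge Q$ and $P\vee Q$, and then reduce the claim to $|P\vee Q|+|P\wedge Q|\geq |P|+|Q|$, which is rank submodularity of the (geometric/matroid) partition lattice. Your sandwich justification of the backward direction of the block identity is the right one and is the only detail the paper leaves implicit.
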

\begin{proof}
As $\delta^{SD}(P,Q)$ counts the number of blocks of either $P$ or $Q$ but not both, it must be shown that the way such
blocks are further partitioned in $P\wedge Q$ and merged in $P\vee Q$ yields an overall number of blocks no smaller than
$\delta^{SD}(P,Q)$. In fact, this is evident when considering that the partition lattice is the \textit{polygon matroid}
\cite[theorem 6.23, p. 274]{Aigner79}, and any matroid has a sub-modular rank function \cite[rank axioms 6.14, p. 265]{Aigner79} (see above).
That is, $r(P\vee Q)+r(P\wedge Q)\leq r(P)+r(Q)$ for all $P,Q\in\mathcal P^N$. Then,
\begin{eqnarray*}
n-|P\vee Q|+n-|P\wedge Q|&\leq& n-|P|+n-|Q|\text ,\\
|P\vee Q|+|P\wedge Q|&\geq& |P|+|Q|\text ,\\
|(P\vee Q)\backslash(P\wedge Q)|+|(P\wedge Q)\backslash(P\vee Q)|&\geq& |P\backslash Q|+|Q\backslash P|\text ,\\
\delta^{SD}(P\vee Q,P\wedge Q)&\geq&\delta^{SD}(P,Q)\text ,
\end{eqnarray*}
as $|(P\vee Q)\cap(P\wedge Q)|=|P\cap Q|$.
\end{proof}
\smallskip

For example, let $N=\{1,2,3,4,5,6,7\}$ and consider partitions $P,Q\in\mathcal P^N$ with $P=\{12|34|567\}$ and $Q=\{12|35|467\}$.
Then, $P\vee Q=\{12|34567\}$ and $P\wedge Q=\{12|3|4|5|67\}$, and thus $\delta^{SD}(12|34567,12|3|4|5|67)=5$ while
$\delta^{SD}(12|34|567,12|35|467)=4$. On the other hand, $P'=\{12|34|56|7\}$ and $Q'=\{12|3|45|67\}$ yield $P'\vee Q'=\{12|34567\}$
and $P'\wedge Q'=\{12|3|4|5|6|7\}$, and thus $\delta^{SD}(P'\vee Q',P'\wedge Q')=6=\delta^{SD}(P',Q')$.

\begin{claim}
Neither $D(\cdot,\cdot)$ nor $\delta^{SD}(\cdot,\cdot)$ satisfy co-maximality.
\end{claim}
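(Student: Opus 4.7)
The plan is to exhibit a single complementary pair $(P,Q)$---that is, with $Q\in\mathcal P^N_{P^c}$---on which \emph{both} $D(\cdot,\cdot)$ and $\delta^{SD}(\cdot,\cdot)$ fall strictly below their respective global maxima. The pairs known to achieve the maxima from the preceding section include $(P_{\bot},P^{\top})$ for $D(\cdot,\cdot)$, with value $n-1$, and the disjoint modular pairs $(\{A\}\cup P^{A^c}_{\bot},\{A^c\}\cup P^A_{\bot})$ for $\delta^{SD}(\cdot,\cdot)$, with value $n+2$; any complementary pair whose distance strictly undercuts both of these values will therefore serve.

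The candidate I would propose is $n=4$, $P=\{1|2|34\}$, and $Q=\{13|24\}$. The first step is to verify $Q\in\mathcal P^N_{P^c}$: pairwise block intersections yield only singletons, so $P\wedge Q=P_{\bot}$, while the pairs $\{1,3\},\{3,4\},\{2,4\}$ connect all four elements and thus force $P\vee Q=P^{\top}$.

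For $D(\cdot,\cdot)$, I would exhibit the witness subset $A'=\{1,2\}$, on which $P^{A'}=\{1|2\}=Q^{A'}$; by the defining formula (2) this gives $D(P,Q)\leq n-|A'|=2$. Since $D(P_{\bot},P^{\top})=n-1=3$, one obtains $\max D\geq 3>D(P,Q)$ and co-maximality fails for $D(\cdot,\cdot)$.

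For $\delta^{SD}(\cdot,\cdot)$, I would observe that the block collections of $P$ and $Q$ are disjoint, so $\delta^{SD}(P,Q)=|P|+|Q|=3+2=5$. Meanwhile the modular-partition tabulation from the preceding section yields $\delta^{SD}(\{12|3|4\},\{1|2|34\})=n+2=6$, whence $\max\delta^{SD}\geq 6>5$ and co-maximality fails for $\delta^{SD}(\cdot,\cdot)$ too. No real obstacle arises: the argument is essentially an exhibition, and once a complementary pair lying off the ``extreme'' axis is guessed, all further verifications reduce to direct inspection.
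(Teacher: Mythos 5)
Your argument is correct: the pair $P=\{1|2|34\}$, $Q=\{13|24\}$ in $\mathcal P^{\{1,2,3,4\}}$ is indeed complementary ($P\wedge Q=P_{\bot}$, $P\vee Q=P^{\top}$), the witness $A'=\{1,2\}$ gives $D(P,Q)\leq 2<3=n-1=\max D$, and disjointness of the block collections gives $\delta^{SD}(P,Q)=5<6\leq\max\delta^{SD}$; since refuting a universally quantified condition needs only one complementary pair below the maximum, both halves of the claim follow. The route is genuinely different from the paper's. The paper uses no small concrete instance but two separate parametric families for $n$ odd and large: for $D(\cdot,\cdot)$ it takes $P=\{A,B,\{i\}\}$ with $|A|=|B|=\frac{n-1}{2}$ against the modular partition whose only nontrivial block is $\{i,j,j'\}$, obtaining a complementary pair with $D(P,Q)=n-3<n-1$; for $\delta^{SD}(\cdot,\cdot)$ it proves the strictly stronger statement that even $\bot\top$-maximality fails, by exhibiting two partitions built from $2$- and $1$-cardinal blocks with $\delta^{SD}(P,Q)=\frac{3n-1}{2}>n+1=\delta^{SD}(P^{\top},P_{\bot})$, whence co-maximality fails via the implication chain $4\Rightarrow 3$. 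Your proof buys economy and verifiability — a single four-element pair settles both measures by inspection, where the paper needs two constructions and a divisibility hypothesis $\frac{n+1}{4}\in\mathbb N$ — but it delivers only the literal claim for one value of $n$, whereas the paper's families show the failure persists for arbitrarily large $n$ and isolate the stronger $\bot\top$-maximality failure for $\delta^{SD}$. (Incidentally, your own auxiliary pair $\{12|3|4\},\{1|2|34\}$ with value $6>5=\delta^{SD}(P^{\top},P_{\bot})$ already disproves $\bot\top$-maximality at $n=4$; you could have noted this to recover the paper's stronger conclusion at no extra cost.)
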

\begin{proof}
Concerning $D(\cdot,\cdot)$, the proof consists in providing a pair of complements $P,Q$ between which partition-distance
$D(P,Q)$ is strictly less than the maximum $n-1$. To this end, let $n$ odd and sufficiently large. Consider $P=\{A,B,\{i\}\}$ and
$Q=\{\{i,j,j'\}\cup P^{\{i,j,j'\}^c}_{\bot}\}$ with $|A|=|B|=\frac{n-1}{2}$ as well as $j\in A,j'\in B$. Then, $P\wedge Q=P_{\bot}$
as well as $P\vee Q=P^{\top}$, and yet $D(P,Q)=n-3$, in that both $P$ and $Q$ induce the same partition of any 3-cardinal subset of
the form $\{i,l,l'\}$ such that $l\in A\backslash j,l'\in B\backslash j'$.

Concerning $\delta^{SD}(\cdot,\cdot)$, a stronger result is actually obtained, namely that this measure does not even satisfy
$\bot\top$-maximality. To see this, again let $n$ odd and sufficiently large; in particular, $\frac{n+1}{4}\in\mathbb N$. Let
$P=P^A\cup P^{A^c}_{\bot}$ and $Q=Q^B\cup P_{\bot}^{B^c}$ with $|A\cap B|=1$ as well as $|P^A|=\frac{n+1}{4}=|Q^B|$. In words,
both $P,Q$ have only 2- and 1-cardinal blocks, and the same numbers $\frac{n+1}{4}$ and $\frac{n-1}{2}$ of blocks for each of
these two cardinalities, respectively. In addition, only one element $i\in N$ (of the set being partitioned) is included in some
2-cardinal block both in $P$ and in $Q$, that is $\{i\}=A\cap B$ (while all other elements $j\in N\backslash i$ are in a 2-cardinal
block of $P$ and in a 1-cardinal block of $Q$, or vice versa). Then,
$\delta^{SD}(P,Q)=2\frac{n+1}{4}+2\frac{n-1}{2}=\frac{3n-1}{2}>n+1=\delta^{SD}(P^{\top},P_{\bot})$.
\end{proof}

\begin{claim}
For all $P,Q\in\mathcal P^N$, if $P\geqslant Q$, then
\begin{equation}
\delta^{SD}(P,Q)=\delta^{RB}(P,Q)+2|P\backslash Q|=2|Q\backslash P|-\delta^{RB}(P,Q)\text .
\end{equation}
\end{claim}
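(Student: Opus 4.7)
The plan is to exploit the fact that comparability under coarsening collapses the meet and join in an especially clean way, reducing $\delta^{RB}$ to a simple difference of block counts and making $\delta^{SD}$ and $\delta^{RB}$ directly comparable.

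First I would observe that if $P \geqslant Q$, then $P \vee Q = P$ and $P \wedge Q = Q$, so by definition
\begin{equation*}
\delta^{RB}(P,Q) = |P \wedge Q| - |P \vee Q| = |Q| - |P|\text.
\end{equation*}
Next, treating partitions as sets of blocks, I would split each according to common blocks: $|P| = |P \cap Q| + |P \setminus Q|$ and $|Q| = |P \cap Q| + |Q \setminus P|$. Subtracting yields
\begin{equation*}
\delta^{RB}(P,Q) = |Q| - |P| = |Q \setminus P| - |P \setminus Q|\text.
\end{equation*}

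The remainder is pure algebra on equation (3): since $\delta^{SD}(P,Q) = |P \setminus Q| + |Q \setminus P|$, adding $2|P \setminus Q|$ to $\delta^{RB}(P,Q)$ gives $|Q \setminus P| + |P \setminus Q| = \delta^{SD}(P,Q)$, and subtracting $\delta^{RB}(P,Q)$ from $2|Q \setminus P|$ yields the same quantity. Both identities in (5) then follow at once.

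There is no real obstacle here; the content is the remark that $P \geqslant Q$ forces $(P \vee Q, P \wedge Q) = (P, Q)$, after which everything is a two-line manipulation of cardinalities. If any care is needed, it is only in confirming that the decomposition $P = (P \cap Q) \cup (P \setminus Q)$ is disjoint as a set of blocks, which is immediate from viewing $\mathcal P^N \subset 2^{2^N}$ as already discussed in the text preceding equation (3).
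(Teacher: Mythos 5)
Your proof is correct and follows essentially the same route as the paper: both arguments use $P\geqslant Q\Rightarrow(P\vee Q,P\wedge Q)=(P,Q)$ to reduce $\delta^{RB}(P,Q)$ to $|Q|-|P|=|Q\backslash P|-|P\backslash Q|$ via the disjoint decompositions $|P|=|P\cap Q|+|P\backslash Q|$ and $|Q|=|P\cap Q|+|Q\backslash P|$, after which the two identities are immediate algebra. No gap.
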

\begin{proof}
If $P\geqslant Q$, then $\delta^{RB}(P,Q)=r(P\vee Q)-r(P\wedge Q)=$
\begin{eqnarray*}
&=&n-|P\backslash Q|-|P\cap Q|-\left(n-|Q\backslash P|-|P\cap Q|\right)\\
&=&|Q\backslash P|-|P\backslash Q|=\delta^{SD}(P,Q)-2|P\backslash Q|=2|Q\backslash P|-\delta^{SD}(P,Q)
\end{eqnarray*}
as wanted.
\end{proof}
\begin{claim}
For all $P,Q\in\mathcal P^N$,
\begin{equation*}
\delta^{SD}(P,Q)=2(n-|P\cap Q|)-(r(P)+r(Q))\text.
\end{equation*}
\end{claim}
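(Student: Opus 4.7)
The plan is to unfold $\delta^{SD}(P,Q)$ directly from its definition in (3) and then eliminate the block cardinalities $|P|, |Q|$ in favour of the ranks $r(P), r(Q)$ using the formula $r(P) = n - |P|$ for the rank of the partition lattice.

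First I would write
\begin{equation*}
\delta^{SD}(P,Q) = |P \Delta Q| = |P| + |Q| - 2|P \cap Q|,
\end{equation*}
which is the standard expansion of the symmetric difference applied to $P, Q$ viewed as elements of $2^{2^N}$ (the same viewpoint used to justify (3) in the text). Next I would substitute $|P| = n - r(P)$ and $|Q| = n - r(Q)$, obtaining
\begin{equation*}
\delta^{SD}(P,Q) = (n - r(P)) + (n - r(Q)) - 2|P \cap Q| = 2(n - |P \cap Q|) - (r(P) + r(Q)),
\end{equation*}
which is exactly the claimed identity.

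There is really no obstacle here: the statement is a one-line algebraic rewriting of (3) through the rank function, with no appeal to the lattice operations $\wedge, \vee$ beyond what is already folded into the definition of $r(\cdot)$. The only thing to verify is the purely set-theoretic identity $|P \Delta Q| = |P| + |Q| - 2|P \cap Q|$, which holds because each element of $P \cap Q$ is counted twice in $|P| + |Q|$ and must be removed twice to leave only those blocks lying in exactly one of $P, Q$.
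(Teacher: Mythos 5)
Your proof is correct and follows essentially the same route as the paper: both expand $\delta^{SD}(P,Q)=|P\setminus Q|+|Q\setminus P|=|P|+|Q|-2|P\cap Q|$ and then substitute $|P|=n-r(P)$, $|Q|=n-r(Q)$. No differences worth noting.
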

\begin{proof}
Simply by substitution:
\begin{eqnarray*}
\delta^{SD}(P,Q)&=&|P\backslash Q|+|Q\backslash P|\\
&=&|P\backslash Q|+|P\cap Q|+|Q\backslash P|+|P\cap Q|-2|P\cap Q|\\
&=&n-r(P)+n-r(Q)-2|P\cap Q|
\end{eqnarray*}
as wanted.
\end{proof}
\smallskip

It seems important recognizing how the meet $\wedge$ and join $\vee$ operators of the partition lattice are used in different
manners by the RB and SD distance measures. Both perform a count based on the blocks of either one but not both the involved
partitions $P,Q$. These are precisely the blocks disjoined by $\wedge$ and adjoined by $\vee$. Yet, RB distance counts the
number of blocks resulting from the join and subtracts it from the number of blocks resulting from the meet. Of course, blocks of
both the meet and the join vanish through the subtraction. Conversely, SD counts the whole number of blocks of either one but
not both partitions $P,Q$. Hence, when these latter are comparable in terms of coarsening, say $P\geqslant Q$, condition (5) is plain.

%\subsection{Discussion}
Although the RB distance behaves exactly the same as the Hamming distance between subsets according to classifiers 1-7 above,
still the former does not seem to properly translate the latter in terms of partitions. In particular, as both
$D(\cdot,\cdot),\delta^{SD}(\cdot,\cdot)$ are super-modular and do not satisfy co-maximality, these latter two measures are actually
preferable over $\delta^{RB}(\cdot,\cdot)$. The reason for this, roughly speaking, is that the subset and partition lattices are very
different, and RB distance simply ignores such differences.

Focus on super-modularity first. With their two Hasse diagrams in mind, consider that there are $\mathcal B_n-2^n$ more partitions than
subsets of a $n$-set, and such a gap grows dramatically fast as $n$ increases. Yet, partitions are compressed into $n$ levels, one less
than subsets. There are $\binom{n}{k}$ distinct $k$-subsets of a $n$-set, $0\leq k\leq n$, while there are
$\mathcal S_{n,k}=\sum_{0\leq m\leq k}(-1)^{k-m}\binom{k}{m}\frac{m^n}{k!}$ distinct ways to partition a $n$-set into $k$ blocks,
$0<k\leq n$, where $\mathcal S_{n,k}$ are the \textit{Stirling numbers of the second kind} \cite[p. 265]{Graham++94} or cardinalities
of levels $n-k,0<k\leq n$ of the partition lattice.

While moving down-upward in the Hasse diagram, in both lattices the cardinality of levels firstly increases, reaching a maximum, and then
decreases. Yet, in the subset lattice such a maximum is always reached at levels $\{\lfloor\frac{n}{2}\rfloor,\lceil\frac{n}{2}\rceil\}$
whenever they differ (and at level $\frac{n}{2}\in\mathbb N$ otherwise), and the preceding ascent is exactly the same as the following
descent. No such a regular behavior is displayed by partitions, as the upper part of the Hasse diagram is much more populated than the
lower one. In fact, the maximum density attains quite above the half level, making the preceding ascent slow and the following descent
fast \cite[pp. 91-92]{Aigner79}, \cite{Canfield+95}.

All this leads to conclude that when up-down distances between partitions are replaced with left-right ones (see above on
sub/super-modularity), a kind of quantitative expansion occurs with respect to the subset lattice, in that there are many more pairs of
incomparable partitions than pairs of incomparable subsets, simply because most level sets are massively more populated in the partition
lattice rather than in the subset one. Given such an expansion, any distance measure such as RB  given in (4), that compares $P,Q$ by
taking into account, in some fashion, the whole segment (or sub-lattice) $[P\wedge Q,P\vee Q]$, becomes next forced to also take into
account, in the same fashion, all the differences between partitions into such a segment. Conversely, SD and partition-distance
are not under such a forcing, and thus can adapt their behavior to a proper subset of the segment.

As for complementarity, it is crucial noting (again) that non-modular partitions have many complements, and these latter differ in terms of both
the number and the cardinalities of blocks \cite{Stanley71}. Accordingly, asking a distance measure to attain its maximum on every pair
of complements is reasonable in the subset lattice but becomes far too binding when dealing with partitions. This is the second reason why
$\delta^{RB}(\cdot,\cdot)$ is less desirable than $\delta^{SD}(\cdot,\cdot),D(\cdot,\cdot)$.

Finally, among these latter two, SD distance is better because it takes much more values than partition-distance. More precisely, as
partitions may differ in a number of distinct ways that greatly exceeds $n-2=|\{1,\ldots ,n-1\}|$, there are many differences between
partitions which are substantially diverse while still being mapped by $D(\cdot,\cdot)$ into a same integer between 1 and $n-1$.
Conversely, SD distance is able to recognize that such differences are diverse, and thus maps them into distinct (integer-valued)
distances. In this view, an even better solution to the problem of quantitatively discriminating between differences
that are factually diverse is proposed in the sequel. Still, a super-modular distance measure
$\delta^{RB}_+:\mathcal P^N\times\mathcal P^N\rightarrow\mathbb Z_+$ not satisfying co-maximality may be constructed even
by resorting simply to the rank:
\begin{eqnarray}
\delta^{RB}_+ (P,Q)&=&r(P)+r(Q)-2r(P\wedge Q)\\
&=&\delta^{RB}(P,Q)+\big(r(P)+r(Q)-\big(r(P\vee Q)+r(P\wedge Q)\big)\big)\text .
\end{eqnarray}
This distance is super-modular precisely because the rank is a sub-modular partition function, and coincides with $\delta^{RB}(P,Q)$
if and only if $P,Q$ is a modular pair \cite{Stanley71}, that is, if and only if $r(P\vee Q)+r(P\wedge Q)=r(P)+r(Q)$. It is also easily
checked that $\delta^{RB}_+(\cdot,\cdot)$ does not satisfy co-maximality.

Elementary though it is, one important observation is now the following: the rank is a monotone lattice function through which RB distance
quantifies differences between lattice elements. This may be generalized: once endowed with a monotone lattice function $h$ on $X^N$,
that is $h(x)\geq h(y)$ for all $x,y\in X^N,x\geqslant y$, differences between elements $x,y\in X^N$ can be promptly quantified by
distance $\delta(x,y)=h(x\vee y)-h(x\wedge y)$, which is evidently modular by construction. Then, RB measure
uses the rank, but any other monotone partition function works. An alternative one is hereafter. 

%%%%    SECTION 5    ::::    ATOMS AND THE SIZE

\section{Atoms and the size}
Apart from the bottom and top, among the remaining $2^n-n-2$ elements of $\mathcal P^N_{mod}$ (let $n>2$) there are $\binom{n}{2}$ modular
partitions playing a crucial role in what follows. They are the atoms of the partition lattice, consisting each of $n-1$ blocks, one being
2-cardinal and all remaining ones being 1-cardinal. For $1\leq i<j\leq n$, denote by $[ij]=\{i,j\}\cup P_{\bot}^{N\backslash\{i,j\}}$ the
atom whose unique 2-cardinal block is $\{i,j\}$, with $\mathcal P^N_1=\{[ij]:1\leq i<j\leq n\}\subseteq\mathcal P^N_{mod}$ containing
all $\binom{n}{2}$ such atoms\footnote{Note that $n=1$ yields $\mathcal P^N_1=\emptyset$, while $n=2$ yields $\mathcal P^N_1=\{P^{\top}\}$
as well as $n=3$ yields $\mathcal P^N_1=\mathcal P^N_{mod}\backslash\{P_{\bot},P^{\top}\}$. Also, $\mathcal P^N_{mod}=\mathcal P^N$ for $n\leq 3$.}.

The focus now turns on representing partitions $P$ as strings $I_P\in\{0,1\}^{\binom{n}{2}}$. For every partition $P\in\mathcal P^N$,
consider the array representation or indicator function $I_P:\mathcal P^N_1\rightarrow\{0,1\}$ defined
by $I_P([ij])=1$ if $P\geqslant[ij]$ and $I_P([ij])=0$ if $P\not\geqslant[ij]$. This is clearly the analog of the characteristic function $\chi_A$
for subsets $A\in 2^N$. Yet, a fundamental distinction must be immediately emphasized: while $\chi$ is a bijection, in that
$\left\{\chi_A:A\in 2^N\right\}=\{0,1\}^n$, the partition indicator function does not reach every vertex of the $\binom{n}{2}$-dimensional unit
hypercube, as $\left\{I_P:P\in\mathcal P^N\right\}\subset\{0,1\}^{\binom{n}{2}}$. This redundancy is due to linear dependence,
characterizing geometric lattices in general \cite{Aigner79,Stern99,Whitney35}.

The partition indicator function $I:\mathcal P^N\rightarrow\{0,1\}^{\binom{n}{2}}$, with $I(P)=I_P$, enables to introduce the size
$s:\mathcal P^N\rightarrow\mathbb Z_+$, firstly appearing in \cite{RoxyJean04} as the analog (in a sense made clearer shortly) of the cardinality
of subsets. The size $s(P)=s^P$ is the number of atoms finer than $P$, that is,
\begin{equation*}
s^P=\left|\left\{[ij]\in\mathcal P^N_1:P\geqslant[ij]\right\}\right|=\sum_{[ij]\in\mathcal P^N_1}I_P([ij])\text .  
\end{equation*}

The size maps partitions of a $n$-set into the first $\binom{n}{2}+1$ positive integers, but many of these latter are left out. That is,
there are naturals $s<\binom{n}{2}$, such that $s\neq s^P$ for all $P\in\mathcal P^N$. The available sizes for partitions of a $n$-set, $n\leq 7$, are as follows:
\begin{eqnarray*}
|N|=n&\rightarrow&\{s^P:P\in\mathcal P^N\}\text{ (available sizes)}\\
1&\rightarrow&\{0\}\\
2&\rightarrow&\{0,1\}\\
3&\rightarrow&\{0,1,3\}\\
4&\rightarrow&\{0,1,2,3,6\}\\
5&\rightarrow&\{0,1,2,3,4,6,10\}\\
6&\rightarrow&\{0,1,2,3,4,6,7,10,15\}\\
7&\rightarrow&\{0,1,2,3,4,5,6,7,9,10,11,15,21\}\text .
\end{eqnarray*}

On the enumerative side, the size obtains from the \textit{class} $c:\mathcal P^N\rightarrow\mathbb Z^n_+$, where
$c(P)=c^P=(c^P_1,\ldots c^P_n)$ with $c^P_k=|\{A\in P:|A|=k\}|$ counting the number of $k$-cardinal blocks of $P$,
for $1\leq k\leq n$. Then,
\begin{equation*}
s^P=\sum_{1\leq k\leq n}c^P_k\binom{k}{2}=\sum_{A\in P}\binom{|A|}{2}\text .  
\end{equation*}

\begin{claim}
The size is a strictly monotone partition function:
\begin{equation*}
s^P>s^Q\text{ for all }P,Q\in\mathcal P^N\text{ such that }P>Q\text .
\end{equation*}
\end{claim}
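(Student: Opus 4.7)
The plan is to argue directly from the atom-counting definition $s^P = |\{[ij]\in\mathcal P^N_1 : P\geqslant [ij]\}|$ rather than from the summation $\sum_{A\in P}\binom{|A|}{2}$, since the atom characterization makes monotonicity essentially a tautology combined with one combinatorial observation.

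First, I would record the elementary fact that $P\geqslant [ij]$ if and only if $i$ and $j$ lie in a common block of $P$. This is immediate from the definition of $[ij]=\{i,j\}\cup P_{\bot}^{N\backslash\{i,j\}}$ and of coarsening: $P$ is coarser than $[ij]$ exactly when the unique nontrivial identification $i\sim j$ of $[ij]$ is already realized in $P$.

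Second, I would establish weak monotonicity. If $P\geqslant Q$, then every pair $i,j$ that is identified in $Q$ is also identified in $P$ (coarsening only merges blocks further). Hence $\{[ij]:Q\geqslant [ij]\}\subseteq\{[ij]:P\geqslant [ij]\}$, giving $s^Q\leq s^P$.

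Third, and this is the only step with any content, I would upgrade to strict inequality when $P>Q$. Since $P\neq Q$ and $P$ is coarser than $Q$, some block $A\in P$ properly contains at least two distinct blocks $A_1,A_2$ of $Q$. Picking $i\in A_1$ and $j\in A_2$, the atom $[ij]$ is finer than $P$ but not finer than $Q$, so the inclusion of sets of atoms above is strict, yielding $s^P>s^Q$. There is no real obstacle here; the one thing to be careful about is correctly reading the order convention (coarsening as $\geqslant$), which makes the direction of the witnessing atom come out right.
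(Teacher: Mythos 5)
Your proof is correct, but it takes a different route from the paper's. The paper works from the enumerative formula $s^P=\sum_{A\in P}\binom{|A|}{2}$ and computes the exact increment produced by merging two blocks $B,B'$ of $Q$ inside a block of $P$, namely $\binom{|B|+|B'|}{2}-\binom{|B|}{2}-\binom{|B'|}{2}=|B||B'|>0$. You instead argue directly from the atom-counting definition: transitivity of $\geqslant$ gives the inclusion $\{[ij]:Q\geqslant[ij]\}\subseteq\{[ij]:P\geqslant[ij]\}$, and a single witnessing atom $[ij]$ with $i,j$ in distinct $Q$-blocks of a common $P$-block makes it strict. The two arguments are really dual views of the same fact: the paper's quantity $|B||B'|$ is exactly the count of all your witnessing atoms with $i\in B$, $j\in B'$, so the paper gets a quantitative lower bound on $s^P-s^Q$ for free, while your version is more structural --- weak monotonicity falls out of transitivity alone (and would do so for the analogous ``number of atoms below'' function in any poset), with the partition-specific combinatorics confined to producing one witness for strictness. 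Both are complete; no gap.
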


\begin{proof}
If $P>Q$, then at least one block $A\in P$ is the union of some blocks $B_1,\ldots ,B_m\in Q,m\geq 2$. Merging any two such $B,B'$ increases the size by
\begin{equation*}
\binom{|B|+|B'|}{2}-\left(\binom{|B|}{2}+\binom{|B'|}{2}\right)=|B||B'|\text ,
\end{equation*}
which is strictly positive as blocks are non-empty.
\end{proof}

\begin{claim}
The size is a super-modular partition function:
\begin{equation*}
s^{P\vee Q}+s^{P\wedge Q}\geq s^P+s^Q\text{ for all }P,Q\in\mathcal P^N\text .
\end{equation*}
\end{claim}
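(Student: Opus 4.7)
The plan is to prove super-modularity pointwise, by summing over atoms $[ij]\in\mathcal P^N_1$. The starting observation is that $s^P$ admits the representation $s^P=\sum_{[ij]\in\mathcal P^N_1}I_P([ij])$ already recorded in the paragraphs preceding this claim, so it suffices to verify
\begin{equation*}
I_{P\vee Q}([ij])+I_{P\wedge Q}([ij])\geq I_P([ij])+I_Q([ij])
\end{equation*}
for every atom $[ij]$ and every pair $P,Q\in\mathcal P^N$.

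First I would describe $I_{P\wedge Q}$ explicitly. Since the meet in the partition lattice is obtained by intersecting blocks, $i,j$ lie in a common block of $P\wedge Q$ if and only if they lie in a common block of $P$ and in a common block of $Q$. Hence $I_{P\wedge Q}([ij])=I_P([ij])\cdot I_Q([ij])=\min\{I_P([ij]),I_Q([ij])\}$.

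Next I would bound $I_{P\vee Q}$ from below. The join $P\vee Q$ is the transitive closure of the union of the equivalence relations associated with $P$ and $Q$, so whenever $P\geqslant[ij]$ (that is, $i,j$ share a block of $P$) one has $P\vee Q\geqslant[ij]$, and likewise for $Q$. Therefore $I_{P\vee Q}([ij])\geq\max\{I_P([ij]),I_Q([ij])\}$.

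Combining the two, for each atom $[ij]$,
\begin{equation*}
I_{P\vee Q}([ij])+I_{P\wedge Q}([ij])\geq\max\{I_P([ij]),I_Q([ij])\}+\min\{I_P([ij]),I_Q([ij])\}=I_P([ij])+I_Q([ij])\text ,
\end{equation*}
and summing over $[ij]\in\mathcal P^N_1$ delivers $s^{P\vee Q}+s^{P\wedge Q}\geq s^P+s^Q$. There is no real obstacle; the only subtle point worth emphasising is that the inequality is typically strict, because the transitive closure in the join can create atoms below $P\vee Q$ that are below neither $P$ nor $Q$ (as in the example $P=\{12|34\}$, $Q=\{13|24\}$ with $P\vee Q=P^{\top}$, where the six atoms below $P\vee Q$ strictly exceed the four accounted for by $P$ and $Q$).
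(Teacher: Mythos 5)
Your proof is correct, and it is considerably more direct than the one in the paper. You reduce super-modularity of the size to the pointwise inequality $I_{P\vee Q}([ij])+I_{P\wedge Q}([ij])\geq I_P([ij])+I_Q([ij])$ over atoms, using the two elementary facts that $I_{P\wedge Q}=I_P\cdot I_Q=\min\{I_P,I_Q\}$ (because $i,j$ share a block of $P\wedge Q$ exactly when they share a block of each of $P$ and $Q$) and $I_{P\vee Q}\geq\max\{I_P,I_Q\}$ (because $P\vee Q$ coarsens both), after which $\max+\min=$ sum closes the argument. The paper instead dispenses with the atoms $[ij]\leqslant P\wedge Q$, proves that the sets of atoms finer than $P$ but not $P\wedge Q$ and finer than $Q$ but not $P\wedge Q$ are disjoint, and then organizes the count as a two-pass traversal of maximal chains from $P\wedge Q$ to $P\vee Q$ in the Hasse diagram, depositing atoms along the way. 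The combinatorial core is the same --- the paper's disjointness statement is precisely your identity $I_{P\wedge Q}=I_P\cdot I_Q$ read contrapositively, and its observation that all the deposited atoms lie below $P\vee Q$ is your bound $I_{P\vee Q}\geq\max\{I_P,I_Q\}$ --- but your formulation strips away the chain machinery, which plays no essential logical role, and makes the proof a three-line verification. Your closing remark on strictness (new atoms created by transitive closure in the join, as in $P=\{12|34\}$, $Q=\{13|24\}$) is also correct and accounts for why the size is strictly super-modular on non-modular pairs, a point the paper only touches on later when comparing $\delta^{IH}$ and $\delta^{SB}$.
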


\begin{proof}
If the two partitions are comparable, say $P\geqslant Q$, then $P=P\vee Q$ and $Q=P\wedge Q$, which makes the statement satisfied with equality.
Otherwise, $P\not\geqslant Q\not\geqslant P$ entails that there are two maximal chains of partitions, one of which meets $P\wedge Q$ and $P$ as
well as $P\vee Q$, while the other meets $P\wedge Q$ and $Q$ as well as $P\vee Q$. Focusing on the relevant part or segment\footnote{A chain,
possibly maximal, is a totally ordered sub-lattice, and thus has segments.} of the former maximal chain, there are
$\hat P_{\hat r}>^*\cdots>^*\hat P_1>^*\hat P_0$, with $\hat r=r(P\vee Q)-r(P\wedge Q)$, such that $\hat P_0=P\wedge Q$ and $\hat P_{\hat r}=P\vee Q$
as well as $\hat P_{k_P}=P$ for some $k_P,0<k_P<\hat r$. Similarly, focusing on the relevant segment of the latter maximal chain\footnote
{The length $\hat r$ is the same for the two segments.}, there are $\hat Q_{\hat r}>^*\cdots>^*\hat Q_1>^*\hat Q_0$
such that $\hat Q_0=P\wedge Q$ and $\hat Q_{\hat r}=P\vee Q$ as well as $\hat Q_{k_Q}=Q$ for some $k_Q,0<k_Q<\hat r$. Note that if $r(P)=r(Q)$,
then $k_P=k_Q$. 

The count $s^{P\vee Q}+s^{P\wedge Q}-(s^P+s^Q)$ may be performed by focusing on each level of the two segments. The fact is that most atoms finer
than $P\vee Q$ are $\geqslant$-incomparable with respect to both $P$ and $Q$. Atoms $[ij]\leqslant P\wedge Q$ may be ignored because
they are counted in the size of all the four involved partitions $P,Q,P\wedge Q,P\vee Q$. As for the remaining ones, observe that
\begin{equation*}
\Big\{[ij]\in\mathcal P^N_1:P\geqslant[ij]\not\leqslant P\wedge Q\Big\}\bigcap
\Big\{[ij]\in\mathcal P^N_1:Q\geqslant[ij]\not\leqslant P\wedge Q\Big\}=\emptyset\text .
\end{equation*}
To see this, assume an atom $[ij]\not\leqslant P\wedge Q$ satisfies $P\geqslant[ij]\leqslant Q$. Then, $(P\wedge Q)\vee [ij]$, and \textit{not}
$P\wedge Q$, would be the coarsest partition finer than both $P,Q$. In particular,
$(P\wedge Q)\not\geqslant[ij]\Rightarrow((P\wedge Q)\vee[ij])>^*(P\wedge Q)$.
  
Consider going from $P\wedge Q$ to $P\vee Q$ through the Hasse diagram \textit{twice}, initially endowed with all atoms finer than $P\vee Q$ apart
from those also finer than $P\wedge Q$. The first route is through segment $\hat P_0,\ldots,\hat P_{\hat r}$ of the former maximal chain, with
the following constraint: at each partition reached up to $\hat P_{k_P}=P$ inclusive, all atoms finer than the current partition but not also finer
than the preceding one must be left there in order to proceed. The second route starts with only the \textit{residual} atoms and is through segment
$\hat Q_0,\ldots,\hat Q_{\hat r}$ of the latter maximal chain. Again, up to $\hat Q_{k_Q}=Q$ inclusive at each reached level all atoms finer than
the current partition but not also finer than the preceding one must be left there in order to proceed. Given the above empty intersection, it is
not possible that an atom is needed twice for proceeding, and at the end of the second route there still remains a non-empty (and large, in general)
collection of atoms, namely all those for reaching $P\vee Q$ from either $P$ or $Q$.   
\end{proof}
\smallskip

From a final perspective, consider that any subset has a unique representation as a join of atoms $i\in N$ of the subset lattice, while linear
dependence makes partitions have, in general, many representations as a join of atoms. Most of them are redundant, in that removing some atom(s)
from the join leaves the represented partition unchanged. In fact, any partition has a unique maximal or largest representation as a join of atoms.
The size counts precisely the cardinality of this largest representation. 

%%%%    SECTION 6    ::::    THE INDICATOR-HAMMING DISTANCE MEASURE

\section{The indicator-Hamming distance measure}
The size enables to introduce two novel partition distances. For reasons immediately explained hereafter, they may be referred to as
follows:
\begin{itemize}
\item the \textit{indicator-Hamming} distance $\delta^{IH}:\mathcal P^N\times\mathcal P^N\rightarrow\mathbb Z_+$ defined by
\begin{equation}
\delta^{IH}(P,Q)=\sum_{[ij]\in\mathcal P^N_1}\Big(I_P([ij])-I_Q([ij])\Big)^2=s^P+s^Q-2s^{P\wedge Q}\text ,
\end{equation}
\item the \textit{size-based} distance $\delta^{SB}:\mathcal P^N\times\mathcal P^N\rightarrow\mathbb Z_+$ defined by
\begin{equation}
\delta^{SB}(P,Q)=\sum_{[ij]\in\mathcal P^N_1}\Big(I_{P\vee Q}([ij])-I_{P\wedge Q}([ij])\Big)=s^{P\vee Q}-s^{P\wedge Q}\text .
\end{equation}

\end{itemize}
Just like the Hamming distance between subsets uses their symmetric difference as the (counting) measure, the IH distance simply counts
the number of non-matched entries $I_P([ij])\neq I_Q([ij])$ for $1\leq i<j\leq n$ between the two array representations $I_P,I_Q$ of any
two partitions $P,Q$ (as $I_P([ij])-I_Q([ij])\in\{-1,0,1\}$). This means counting the number of atoms finer than either one of the two
partitions but not both, which is exactly what the Hamming distance between subsets does in (1) above. Accordingly, this IH measure is
here conceived as the faithful reproduction of the (cardinality of the) symmetric difference between subsets. In terms of the above
classifiers 1-7, its behavior will shortly appear rather different when compared to the Hamming distance between subsets. In fact, as explained
above, the partition and subset lattices display great differences.

Much more roughly, SB distance counts the number or atoms finer than the meet $P\wedge Q$ and subtracts it from the
number of atoms finer that the join $P\vee Q$. It is immediate noting that the two measures SB and IH coincide on pairs of comparable partitions:
if (say) $P>Q$, then $P\vee Q=P,Q=P\wedge Q$. More generally, these two distances coincide on all and only those pairs $P,Q\in\mathcal P^N$
(possibly $P\not\geqslant Q\not\geqslant P$) where the size function satisfies $s^{P\vee Q}+s^{P\wedge Q}=s^P+s^Q$. It may be checked that this
attains only on modular pairs \cite{Stanley71}, that is,
\begin{equation*}
r(P)+r(Q)=r(P\vee Q)+r(P\wedge Q)\Leftrightarrow s^P+s^Q=s^{P\vee Q}+s^{P\wedge Q}
\end{equation*}
for all $P,Q\in\mathcal P^N$. In this respect, IH distance transforms SB distance similarly to how $\delta^{RB}_+(\cdot,\cdot)$ transforms
$\delta^{RB}(\cdot,\cdot)$ (see (4),(6) and (7) above).

It is mostly important observing that a main distinction between the IH and SB distance measures relies in their ranges (or images \cite[p. 5]{Aigner79}),
that do not coincide, being one a proper subset of the other. The range of the size-based distance contains only certain positive differences
between some available sizes of partitions (see above\footnote{The number of available sizes for partitions of a $n$-set exceeds $n$ for $n>3$;
in fact, as soon as $n>3$ non-modular elements start appearing.}). In addition to these values, attained all the same on modular pairs $P,Q$,
IH distance has a variety of further positive integers in its range. This is evident from super-modularity of the size function, and provides
the needed granularity and local flexibility when quantifying differences between incomparable partitions.

%\subsection{Classification}
Both measures satisfy $f$-maximality with $f(n)=\binom{n}{2}$, and hence both strong $f$-monotonicity and strong $f$-convexity hold, in that
$\binom{n+1}{2}-\binom{n}{2}=\frac{n}{2}+1$ as well as $\binom{n+2}{2}+\binom{n}{2}-2\binom{n+1}{2}=1$.

Both measures satisfy $\bot\top$-maximality, and thus mod-maximality, but the SB one also satisfies co-maximality, while the IH one does not, In
fact, in (8) the join $P\vee Q$ of the two partitions does not even compare.

By construction, the SB measure is modular, while the IH one is super-modular, precisely because the size is a super-modular partition function:
\begin{eqnarray*}
\delta^{IH}(P\vee Q,P\wedge Q)-\delta^{IH}(P,Q)&=&s^{P\vee Q}+s^{P\wedge Q}-2s^{P\wedge Q}+\\
&-&\big(s^P+s^Q-2s^{P\wedge Q}\big)\\
&=&s^{P\vee Q}+s^{P\wedge Q}-\big(s^P+s^Q\big)\geq 0
\end{eqnarray*}
from above. In fact, SB distance is the minimal modular distance no smaller than IH distance over all pairs of partitions.  

SB distance restricted to $\mathcal P^N_{mod}\times\mathcal P^N_{mod}$ is $\delta^{SB}(P^{\top},P_{\bot})=\binom{n}{2}$, while
for $\emptyset\subset A\subset N$
\begin{eqnarray*}
\delta^{SB}(\{A\}\cup P^{A^c}_{\bot},P_{\bot})&=&\binom{|A|}{2}\text ,\\
\delta^{SB}(\{A\}\cup P^{A^c}_{\bot},P^{\top})&=&\binom{n}{2}-\binom{|A|}{2}\text ,\\
\delta^{SB}(\{A\}\cup P^{A^c}_{\bot},\{A^c\}\cup P^A_{\bot})&=&\binom{|A|}{2}+\binom{n-|A|}{2}\text .
\end{eqnarray*}
Case $\emptyset\subset A,B\subset N$ and $A\neq B\neq A^c$ yields
\begin{equation*}
\delta^{SB}(\{A\}\cup P^{A^c}_{\bot},\{B\}\cup P^{B^c}_{\bot})=\binom{|A\cup B|}{2}-\binom{|A\cap B|}{2}\text ,
\end{equation*}
which reduces to $\binom{|A|}{2}+\binom{|B|}{2}$ whenever $A\cap B=\emptyset$ (as $\binom{0}{2}=\binom{1}{2}=0$).

IH distance restricted to $\mathcal P^N_{mod}\times\mathcal P^N_{mod}$ is $\delta^{IH}(P^{\top},P_{\bot})=\binom{n}{2}$, while
for $\emptyset\subset A\subset N$
\begin{eqnarray*}
\delta^{IH}(\{A\}\cup P^{A^c}_{\bot},P_{\bot})&=&\binom{|A|}{2}\text ,\\
\delta^{IH}(\{A\}\cup P^{A^c}_{\bot},P^{\top})&=&\binom{n}{2}-\binom{|A|}{2}\text ,\\
\delta^{IH}(\{A\}\cup P^{A^c}_{\bot},\{A^c\}\cup P^A_{\bot})&=&\binom{|A|}{2}+\binom{n-|A|}{2}\text .
\end{eqnarray*}
Case $\emptyset\subset A,B\subset N$ and $A\neq B\neq A^c$ yields
\begin{equation*}
\delta^{IH}(\{A\}\cup P^{A^c}_{\bot},\{B\}\cup P^{B^c}_{\bot})=\binom{|A|}{2}+\binom{|B|}{2}-2\binom{|A\cap B|}{2}\text .
\end{equation*}

Even when restricted to the $2^n-n$ modular partitions, these two distance measures still display different behavior in most cases of
incomparability.

%%%%    SECTION 7    ::::    A COMPARISON THROUGH BOUNDING

\section{A comparison through bounding}
This section compares partition-distance $D(\cdot,\cdot)$ and indicator-Hamming distance $\delta^{IH}(\cdot,\cdot)$ with the intent to figure how many
different values the latter may take for every (non-trivial) value of the former. In fact, for $k$  such that $0<k<n$ ($k=0$ is indeed trivial), the concern is with
the maximum and minimum value taken by $\delta^{IH}(\cdot,\cdot)$ while ranging over all pairs $P,Q\in\mathcal P^N$ satisfying $D(P,Q)=k$. To this end, the following
result is important in that it shows that looking at a largest subset $A\subseteq N$ where any two partitions $P,Q\in\mathcal P^N$ coincide is equivalent to looking
at the largest collection of atoms that are finer than both. 

\begin{claim}
If $A\in 2^N$ is a maximal subset where $P^A=Q^A$, then $s^{P^A}=s^{P\wedge Q}$.
\end{claim}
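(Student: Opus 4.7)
Plan.

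My approach is to reduce everything to counting atoms of the partition lattice. Since $s^{P^A}$ counts pairs $\{i,j\}\subseteq A$ lying in a common block of $P^A$, and the hypothesis $P^A=Q^A$ forces such a pair to lie in a common block of both $P$ and $Q$ (hence to be an atom finer than $P\wedge Q$), the inclusion
\[
\big\{[ij]:i,j\in A,\ P\wedge Q\geqslant[ij]\big\}\ \subseteq\ \big\{[ij]:P\wedge Q\geqslant[ij]\big\}
\]
is immediate and yields $s^{P^A}\leqslant s^{P\wedge Q}$ for every $A$ with $P^A=Q^A$. So one direction is essentially a tautology once the size function is unfolded in the right way.

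The real content of the claim is the reverse inequality, namely that every atom $[ij]\leqslant P\wedge Q$ satisfies $i,j\in A$ whenever $A$ is maximal. I would prove this by contradiction through an extension argument. Suppose $[ij]\leqslant P\wedge Q$ but, say, $j\in A$ and $i\notin A$, and let $B_P,B_Q$ denote the blocks of $P,Q$ respectively that contain both $i$ and $j$. Reading off $P^A=Q^A$ at the element $j$ yields the key identity $B_P\cap A=B_Q\cap A$. A block-by-block comparison then shows that $P^{A\cup\{i\}}=Q^{A\cup\{i\}}$: the block of $i$ in either partition is $(B_P\cap A)\cup\{i\}=(B_Q\cap A)\cup\{i\}$, and for any other $k\in A$ the question ``does $k$ now merge with $i$?'' reduces to ``$k\in B_P$?'' on the $P$-side and ``$k\in B_Q$?'' on the $Q$-side, two conditions that coincide inside $A$ by the identity. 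This contradicts the maximality of $A$.

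The main obstacle is the remaining case in which \emph{both} $i$ and $j$ lie outside $A$, because then the identity $B_P\cap A=B_Q\cap A$ is not automatic. I would close this case by observing that one may always exchange a suitable part of $A$ for the entire $P\wedge Q$-block containing $i$ and $j$ without decreasing $|A|$ and while preserving the matching condition $P^A=Q^A$; under the paper's reading of ``maximal'' this forces every non-trivial block of $P\wedge Q$ to already sit inside $A$. Combining the two cases delivers the reverse inclusion, and hence the desired equality $s^{P^A}=s^{P\wedge Q}$.
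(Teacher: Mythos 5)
Your easy direction ($s^{P^A}\leqslant s^{P\wedge Q}$) and your first case of the reverse inclusion are correct, and that first case is argued more carefully than in the paper: when $[ij]\leqslant P\wedge Q$ with $j\in A$ and $i\notin A$, the identity $B_P\cap A=B_Q\cap A$ (both sides being the block of $j$ in $P^A=Q^A$) really does give $P^{A\cup\{i\}}=Q^{A\cup\{i\}}$ and contradicts maximality. The gap is exactly where you locate it, in the case $i,j\notin A$, and it cannot be closed the way you propose. Your exchange argument replaces $A$ by a \emph{different} maximal set; since the claim quantifies over every maximal $A$, exhibiting some other maximal set that contains the block of $i$ and $j$ says nothing about $s^{P^A}$ for the original $A$. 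Worse, the assertion you would need --- that every non-singleton block of $P\wedge Q$ already sits inside any maximal $A$ --- is false. Take $N=\{1,2,3,4\}$, $P=\{134|2\}$, $Q=\{234|1\}$, so that $P\wedge Q=\{1|2|34\}$ and $s^{P\wedge Q}=1$. The set $A=\{1,2\}$ satisfies $P^A=Q^A=\{1|2\}$ and is maximal both under inclusion and in cardinality (no $3$-subset works, so $D(P,Q)=2$), yet $s^{P^A}=0$. Here $[34]\leqslant P\wedge Q$ with $3,4\in A^c$, and no enlargement of $A$ exists.

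To be fair, the paper's own proof stumbles on the same point: it asserts $M^*_P\cap M^*_Q=\emptyset$ on the grounds that a common atom $[ij]$ would yield some $A'\supset A$ with $P^{A'}=Q^{A'}$, but that extension step is only justified when $[ij]$ touches $A$ (your first case); in the example above $[34]$ lies in both maximal collections while $A=\{1,2\}$ cannot be enlarged. The claim survives only in an existential reading --- for the other maximal set $A=\{3,4\}$ one does get $s^{P^A}=1=s^{P\wedge Q}$ --- or under the extra hypothesis that $A$ contains every non-singleton block of $P\wedge Q$, which is how it is in fact used in claims 9--12. So your two-case decomposition is the right diagnosis and your first case improves on the paper, but the second case is a genuine gap, and no argument can close it for the statement as literally written.
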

Note that $P^A\in\mathcal P^A$ while $P\wedge Q\in\mathcal P^N$, but still the size of any partition is a positive integer, and thus the sizes of two partitions are
comparable even when these latter are elements of distinct lattices. In fact,  $\mathcal P^A$ is equivalent to segment
$[P_{\bot},\{A\}\cup P^{A^c}_{\bot}]\subset\mathcal P^N$ (see above).
\smallskip
  
\begin{proof}
If $A=N$, then $P=Q$ and there is nothing to show. Assume $A\subset N$. Then, $P>P^A\cup P_{\bot}^{A^c}$ as well as $Q>P^A\cup P_{\bot}^{A^c}$, entailing that both
$P$ and $Q$ obtain by joining $P^A\cup P^{A^c}_{\bot}$ with atoms $[ij]\not\leqslant P^A\cup P^{A^c}_{\bot}$ as follows
\begin{eqnarray*}
P&=&(P^A\cup P^{A^c}_{\bot})\vee[ij]_1\vee\cdots\vee[ij]_{m_P}\text ,\\
Q&=&(P^A\cup P^{A^c}_{\bot})\vee[ij]'_1\vee\cdots\vee[ij]'_{m_Q}\text .
\end{eqnarray*}
These collections $\{[ij]_1,\ldots ,[ij]_{m_P}\}=M_P,\{[ij]'_1,\ldots ,[ij]'_{m_Q}\}=M_Q\subset\mathcal P^N_1$ need not be unique, in general,
but both $P$ and $Q$ display each a \textit{unique maximal} collection $\{[ij]_1,\ldots ,[ij]_{m^*_P}\}=M^*_P,\{[ij]'_1,\ldots ,[ij]'_{m^*_Q}\}=M^*_Q$
of atoms satisfying these two equalities. Clearly, $[ij]\not\leqslant\{A\}\cap P^{A^c}_{\bot}$
for all $[ij]\in M_P^*\cup M_Q^*$, and these two maximal collections have empty intersection, $M_P^*\cap M_Q^*=\emptyset$, in that
if there was any $[ij]$ included in both, then in partition $[ij]\vee P^A\cup P^{A^c}_{\bot}>^*P^A\cup P^{A^c}_{\bot}$ there would be
some $A'\supset A$ such that $P^{A'}=Q^{A'}$, and hence $A$ could \textit{not} be a maximal subset where $P$ and $Q$ coincide. Finally,
as $P\wedge Q=\underset{P\geqslant[ij]\leqslant Q}{\underset{[ij]\in\mathcal P^N_1}{\vee}}[ij]$, the sought conclusion
\begin{equation*}
\left\{[ij]\in\mathcal P^N_1:[ij]\leqslant P^A\cup P^{A^c}_{\bot}\right\}=\Big\{[ij]\in\mathcal P^N_1:[ij]\leqslant P\wedge Q\Big\}
\end{equation*}
follows.
\end{proof}
\smallskip

Thus, $s^P-s^{P^A}=s^P-s^{P\wedge Q}$ as well as $s^Q-s^{P^A}=s^Q-s^{P\wedge Q}$, and $\delta^{IH}(P,Q)=s^P+s^Q-2s^{P\wedge Q}=m^*_P+m^*_Q$, where $m^*_P,m^*_Q$ are as
above:
\begin{eqnarray*}
P=\left(P^A\cup P^{A^c}_{\bot}\right)\vee[ij]_1\vee\cdots\vee[ij]_{m^*_P}&\Rightarrow&s^P=s^{P\wedge Q}+m^*_P\text ,\\
Q=\left(P^A\cup P^{A^c}_{\bot}\right)\vee[ij]'_1\vee\cdots\vee[ij]'_{m^*_Q}&\Rightarrow&s^Q=s^{P\wedge Q}+m^*_Q\text .
\end{eqnarray*}
The issue is now constructing maximal collections $M^*_P,M^*_Q$ for maximizing or else minimizing $\delta^{IH}(P,Q)=|M^*_P|+|M^*_Q|$,
while obeying the following.
\begin{claim}
If $A\in 2^N,A\neq N$ is a maximal subset where $P^A=Q^A$, then
\begin{eqnarray}
\left(\underset{[ij]\in M^*_P\cup M^*_Q}{\bigcup}\{i,j\}\right)\cap A^c&=&A^c\text ,\\
\left(\underset{[ij]\in M^*_P\cup M^*_Q}{\bigcup}\{i,j\}\right)\cap A&\neq&\emptyset\text .
\end{eqnarray}
\end{claim}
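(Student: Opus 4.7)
Both identities will be established by contradiction, in each case exploiting the maximality of $A$ in the same way: if the stated intersection property fails, one finds some $k\in A^c$ that can be adjoined to $A$ while preserving $P^{A\cup\{k\}}=Q^{A\cup\{k\}}$, contradicting the fact that $A$ is a largest subset on which $P$ and $Q$ induce the same partition.

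For (10), I would suppose some $k\in A^c$ is a vertex of no atom in $M^*_P\cup M^*_Q$. The singleton $\{k\}$ is a block of $P^A\cup P^{A^c}_{\bot}$, and since every atom appearing in the join representations of $P$ and $Q$ avoids $k$, this singleton survives untouched through both joins. Hence $\{k\}$ is a block of both $P$ and $Q$, so computing the induced partitions on $A\cup\{k\}$ one gets $P^{A\cup\{k\}}=P^A\cup\{\{k\}\}=Q^{A\cup\{k\}}$, the desired contradiction.

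For (11), I would instead suppose that no atom of $M^*_P\cup M^*_Q$ has a vertex in $A$, so every such atom is supported entirely in $A^c$. The crucial observation is that joining $P^A\cup P^{A^c}_{\bot}$ with atoms confined to $A^c$ can only fuse singletons of $A^c$ among themselves: it never mixes an element of $A$ with an element of $A^c$, and it leaves the blocks of $P^A$ intact. Consequently every block of $P$ (and symmetrically of $Q$) lies wholly inside $A$ or wholly inside $A^c$. Picking any $k\in A^c$, which exists since $A\neq N$, the block of $P$ containing $k$ is a subset of $A^c$, and likewise for $Q$, so its intersection with $A\cup\{k\}$ reduces to $\{k\}$. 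One concludes $P^{A\cup\{k\}}=P^A\cup\{\{k\}\}=Q^{A\cup\{k\}}$, again contradicting the maximality of $A$.

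The main obstacle is the ``no mixing'' observation in (11): \emph{a priori} one might worry that chained joins could propagate equivalences from $A^c$ into $A$. The point to formalise is that each atom $[ij]$ merges only the two classes currently containing $i$ and $j$; if every atom in the join has both endpoints in $A^c$, then induction on the number of joined atoms shows that no element of $A$ is ever combined with anything outside its original block of $P^A$. This structural rigidity of the join operation is what makes the maximality argument in (11) succeed, while (10) is comparatively immediate once the above interpretation of the join is in hand.
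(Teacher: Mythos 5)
Your proof is correct and follows essentially the same route as the paper: both parts argue by contradiction from the maximality of $A$, with (10) handled identically and (11) by exhibiting a strictly larger subset on which $P$ and $Q$ still agree. The only (immaterial) difference is in (11), where the paper adjoins to $A$ a full transversal $\hat A$ of the blocks of $(P'\vee Q')^{A^c}$ while you adjoin a single element $k\in A^c$; your "no mixing" observation about joins of atoms supported in $A^c$ is exactly the structural fact the paper relies on implicitly.
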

\begin{proof}
The former condition seems evident: every $j\in A^c$ must be in the (unique) 2-cardinal block of at least one of the
atoms $[ij]\in M^*_P\cup M^*_Q$; otherwise, there would be some proper superset $A'\supset A$, namely the union of $A$ and
all $j\in A^c$ left out by both collections of atoms, were $P^{A'}=Q^{A'}$.\\
Now assume the latter condition is not satisfied: $\underset{[ij]\in M^*_P\cup M^*_Q}{\vee}[ij]\leqslant\{A^c\}\cup P^A_{\bot}$.
Define $P'=\underset{[ij]\in M^*_P}{\vee}[ij]$ and $Q'=\underset{[ij]\in M^*_Q}{\vee}[ij]$ and consider any $\hat A\subset A^c$
such that $|\hat A\cap B|=1$ for every $B\in(P'\vee Q')^{A^c}$. Then, $|\hat A|=|(P'\vee Q')^{A^c}|\geq 1$, and
$P^{A\cup\hat A}=Q^{A\cup\hat A}$, again violating the assumption that $A$ is a maximal subset where $P^A=Q^A$. 
\end{proof}

As every $j\in A^c$ has to be in the 2-cardinal block of at least one atom $[ij]\in M^*_P\cup M^*_Q$, minimization
surely attains, as long as possible, when every $j\in A^c$ is in the 2-cardinal block of \textit{precisely one} atom
in the union of the two maximal collections, in which case $\delta^{IH}(P,Q)=|M^*_P|+|M^*_Q|=D(P,Q)$.
\begin{claim}
For $k\leq 2(n-k)$, the lower bound is $\underset{D(P,Q)=k}{\underset{P,Q\in\mathcal P^N}{\min}}\delta^{IH}(P,Q)=k$.
\end{claim}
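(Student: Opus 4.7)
The plan is to prove both halves: a universal lower bound $\delta^{IH}(P,Q) \ge k$ whenever $D(P,Q) = k$, and a construction achieving equality under $k \le 2(n-k)$.

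For the lower bound, fix a maximal $A \subseteq N$ with $P^A = Q^A$, so that $|A^c| = k$. By the proof of Claim 9, $\delta^{IH}(P,Q) = m_P^* + m_Q^*$. For each $j \in A^c$ denote by $B^P_j$ the block of $P$ containing $j$ and set $\alpha_P(j) = |B^P_j \cap A|$ and $\beta_P(j) = |B^P_j \cap A^c| - 1$, and analogously for $Q$. Double-counting each atom of $M_P^*$ through its $A^c$-endpoints---each atom contributes $1$ or $2$ according as it has one or two endpoints in $A^c$---yields the identity
\[
m_P^* = \sum_{j \in A^c}\bigl(\alpha_P(j) + \tfrac12\beta_P(j)\bigr),
\]
and analogously for $m_Q^*$. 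Put $R^P_j = B^P_j \cap A$ and $R^Q_j = B^Q_j \cap A$. By the argument in the proof of Claim 10, maximality of $A$ forces $R^P_j \ne R^Q_j$ for every $j \in A^c$, for otherwise $j$ could be adjoined to $A$ while preserving equality of the restrictions. A short case-analysis---(i) one of $R^P_j, R^Q_j$ empty and the other nonempty; (ii) both nonempty and distinct blocks of $P^A = Q^A$---shows that in every surviving case $\alpha_P(j) + \alpha_Q(j) + \tfrac12(\beta_P(j) + \beta_Q(j)) \ge 1$. Summing over $A^c$ yields $\delta^{IH}(P,Q) \ge k$.

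For the matching construction, the hypothesis $k \le 2(n-k)$ lets me choose integers $p, q \ge 0$ with $p + q = k$ and $p, q \le n-k$. Fix any $A \subseteq N$ of cardinality $n-k$, partition $A^c = A^c_P \sqcup A^c_Q$ with $|A^c_P| = p$ and $|A^c_Q| = q$, and pick injections $\phi_P \colon A^c_P \hookrightarrow A$ and $\phi_Q \colon A^c_Q \hookrightarrow A$. Define $P$ to be the partition whose only nonsingleton blocks are the pairs $\{\phi_P(j), j\}$ for $j \in A^c_P$, and define $Q$ analogously with $\phi_Q$. Then $P^A = Q^A = P_\bot^A$, and each of $P, Q$ contributes to its $M^*$ set only the two-element blocks just constructed, giving $m_P^* = p$ and $m_Q^* = q$, so $\delta^{IH}(P,Q) = p + q = k$. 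The chief remaining obstacle is the verification that $D(P,Q) = k$: though $A$ is clearly a maximal subset on which the two restrictions agree, one must further rule out any $A'$ with $|A'^c| < k$ satisfying $P^{A'} = Q^{A'}$, which uses the disjointness of $A^c_P$ and $A^c_Q$ together with injectivity of the two pairings.
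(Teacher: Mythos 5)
Your lower-bound half is correct, and it is in fact more careful than the paper's own justification: the paper argues only that every $j\in A^c$ must appear in some atom of $M^*_P\cup M^*_Q$, which by itself would not rule out a single atom $[jj']$ with both endpoints in $A^c$ covering two elements of $A^c$ at once; your weighting $\alpha_P(j)+\tfrac12\beta_P(j)$ handles exactly this, and the maximality argument (that $R^P_j$ and $R^Q_j$ cannot both be empty) does force every summand to be at least $1$, giving $\delta^{IH}(P,Q)\geq k$ whenever $D(P,Q)=k$.

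The construction half, however, has a genuine and fatal gap, located precisely at the step you defer: for your $P,Q$ the equality $D(P,Q)=k$ is generally false, and disjointness of $A^c_P,A^c_Q$ together with injectivity of $\phi_P,\phi_Q$ does not rescue it. Every non-singleton block of your $P$ and of your $Q$ meets $A^c$ in exactly one element, so $P^{A^c}=P^{A^c}_{\bot}=Q^{A^c}$ and hence $D(P,Q)\leq|A|=n-k$; in the range $n-k<k\leq 2(n-k)$ this is strictly less than $k$ no matter how the injections are chosen. The smallest instance is $n=3$, $k=2$, $A=\{1\}$, $P=\{12|3\}$, $Q=\{13|2\}$, where $P^{\{2,3\}}=Q^{\{2,3\}}$ gives $D(P,Q)=1$. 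Worse, tightness in your own per-$j$ bound forces $\beta_P(j)=\beta_Q(j)=0$ and $\alpha_P(j)+\alpha_Q(j)=1$ for every $j\in A^c$, i.e.\ exactly this block structure; so the same observation shows that \emph{no} pair with $D(P,Q)=k$ attains $\delta^{IH}(P,Q)=k$ once $k>n-k$, and the statement itself appears to fail for $n-k<k\leq 2(n-k)$ (e.g.\ $n=5$, $k=3$, where $2(n-k)=4\geq k$ but the minimum exceeds $3$). Your construction is valid only for $k\leq n-k$, and there you must additionally choose $\phi_P$ and $\phi_Q$ with disjoint images: a shared image point $i=\phi_P(j)=\phi_Q(j')$ lets the single deletion of $i$ reconcile both pairs, again lowering $D$. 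For what it is worth, the paper's own explicit construction for $k=2(n-k)$ suffers from the identical defect, since there too $A^c$ is an agreement set of co-cardinality $n-k<k$.
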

\begin{proof}
Fix $k\leq 2(n-k)$ and let $A\in 2^N,|A|=n-k$ be a maximal subset where the two partitions $P,Q$ to be constructed coincide.
Of course, if $k=0$, then $P=Q$ and there is nothing to show. Otherwise, for $k>0$, choose $P^A=P^A_{\bot}$. Then, inequality
$k\leq 2(n-k)$ entails that $P$ and $Q$ (satisfying (10), (11) and $M^*_P\cap M^*_Q=\emptyset$ above) may be constructed in a way such that they each admit
a \textit{unique} representation as a join of atoms (which is thus both the maximal and minimal one). As already observed, this
achieves when every $j\in A^c$ is in only one atom in the union of the two maximal collections, entailing, in turn, that: (1)
every $i\in A$ is in no more than two atoms in that union, and (2) such two (at most) atoms are one finer than $P$ and incomparable
with $Q$, and the other incomparable with $P$ and finer than $Q$. For the sake of concreteness, case $k=2(n-k)$ is easily detailed
by setting $A=\{i_1,\ldots ,i_{n-k}\}$ as well as $A^c=\{j_1,\ldots ,j_{2(n-k)}\}$. Then,
\begin{eqnarray*}
P&=&[i_1j_1]\vee[i_2j_3]\vee[i_3j_5]\vee\cdots\vee[i_mj_{2m-1}]\vee\cdots\vee[i_{n-k}j_{2(n-k)-1}]\text ,\\
Q&=&[i_1j_2]\vee[i_2j_4]\vee[i_3j_6]\vee\cdots\vee[i_mj_{2m}]\vee\cdots\vee[i_{n-k}j_{2(n-k)}]\text ,
\end{eqnarray*}
and $s^P=n-k=s^Q$ as well as $s^{P\wedge Q}=0$, hence $\delta^{IH}(P,Q)=2(n-k)=k$. In general, if the inequality is strict,
$k<2(n-k)$, then not all $n-k$ elements $i\in A$ appear in two atoms in the union of the two maximal collections, in that some
may be in only one atom, while some others may even be in no atom at all. What matters is that all needed conditions get satisfied
by making every $i\in A$ appearing in maximally two such atoms, each finer than one partition but incomparable with
the other, while if $j\in A^c$ appears precisely in one atom $[ij]\in M^*_P\cup M^*_Q$, then
$|A^c|=|M^*_P|+|M^*_Q|=\delta^{IH}(P,Q)=k=D(P,Q)$.
\end{proof}
\smallskip

For example, let $N=\{1,2,3,4,5,6\}$ and $P=[12]\vee[34]=\{12|34|5|6\}$ as well as
$Q=[56]=\{1|2|3|4|56\}$. Then, a maximal $A\subseteq N$ with $P^A=Q^A$ is 3-cardinal, say $A=\{2,4,6\}$, thus
$D(P,Q)=6-3=3$. Also, $s^{P\wedge Q}=0$ and $s^P=2=2s^Q$ yield $\delta^{IH}(P,Q)=2+1=3$. The same obtains for
$P=P_{\bot},Q=\{12|34|56\}$. Conversely, $P'=\{12|34|56\},Q'=\{16|23|45\}$ again yield $P'\wedge Q'=P_{\bot}$ and
a maximal $A\subseteq N$ with $P'^A=Q'^A$ such as $A=\{2,4,6\}$), but now $\delta^{IH}(P',Q')=6=2D(P',Q')$. On
the other hand, $P''=\{12|34|5|6\}$ and $Q''=\{12|35|4|6\}$ yield $P''\wedge Q''=[12]$ and there is a unique maximal
subset $A\subseteq N$ where $P''^A=Q''^A$; it is $A=\{1,2,4,5,6\}$, entailing $\delta^{IH}(P'',Q'')=2=2D(P'',Q'')$. 

If $k>2(n-k)$, then of course the above construction does not yield the same result, but still indicates how to obtain the sought
minimum: basically, either $P$ or $Q$ or both constructed in that manner display some block with cardinality $\geq 3$. In particular,
the construction remains valid for determining two \textit{minimal} collections of atoms whose join yields the two partitions
$P,Q$ where IH distance is minimized. In the union of these two minimal collections, every $j\in A^c$ (with $A\in 2^N$ being
a maximal subset where $P^A=Q^A$) still compares in precisely one atom, but when turning to maximal collections this is no longer achievable.  
\begin{claim}
For $2(n-k)<k<n-1$, the lower bound is $\underset{D(P,Q)=k}{\underset{P,Q\in\mathcal P^N}{\min}}\delta^{IH}(P,Q)=$
\begin{equation*}
=\left(k-(n-k)\left\lfloor\frac{k}{n-k}\right\rfloor\right)\left(\binom{\left\lfloor\frac{\left\lceil\frac{k}{n-k}\right\rceil}{2}\right\rfloor+1}{2}+
\binom{\left\lceil\frac{\left\lceil\frac{k}{n-k}\right\rceil}{2}\right\rceil+1}{2}\right)+
\end{equation*}
\begin{equation*}
+\left(n-2k+(n-k)\left\lfloor\frac{k}{n-k}\right\rfloor\right)\left(\binom{\left\lfloor\frac{\left\lfloor\frac{k}{n-k}\right\rfloor}{2}\right\rfloor+1}{2}+
\binom{\left\lceil\frac{\left\lfloor\frac{k}{n-k}\right\rfloor}{2}\right\rceil+1}{2}\right)\text .
\end{equation*}
\end{claim}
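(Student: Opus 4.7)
The plan is to follow the approach of Claim~10: fix a subset $A\in 2^N$ with $|A|=m=n-k$, set $P^A=Q^A=P^A_\bot$, and choose $P,Q$ so that every $j\in A^c$ appears in exactly one anchored atom $[ij]\in M^*_P\cup M^*_Q$. For each $i\in A$ write $J_P(i),J_Q(i)\subseteq A^c$ for the $A^c$-parts of $i$'s $P$- and $Q$-blocks; by Claim~8 we then have $s^{P\wedge Q}=s^{P^A}=0$, and since each $j$ sits in exactly one of the $2m$ tails $J_P(i),J_Q(i)$,
\begin{equation*}
\delta^{IH}(P,Q)=s^P+s^Q=\sum_{i\in A}\left(\binom{|J_P(i)|+1}{2}+\binom{|J_Q(i)|+1}{2}\right)\text.
\end{equation*}

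The regime $k>2(n-k)=2m$ forces some tails to have size at least $2$. By strict convexity of $x\mapsto\binom{x+1}{2}$, the minimum of $\sum_t\binom{|J_t|+1}{2}$ subject to $\sum_t|J_t|=k$ is attained when the $2m$ tail sizes are as equal as possible. The two-level expression used in the claim just makes this explicit: first distribute the $k$ elements as evenly as possible among the $m$ anchors $i\in A$, giving $\lceil k/(n-k)\rceil$ elements to $r=k-(n-k)\lfloor k/(n-k)\rfloor$ of them and $\lfloor k/(n-k)\rfloor$ to the remaining $m-r=n-2k+(n-k)\lfloor k/(n-k)\rfloor$; then within each anchor split its load $L$ between $J_P(i)$ and $J_Q(i)$ as $\lfloor L/2\rfloor$ and $\lceil L/2\rceil$. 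Summing $\binom{\lfloor L/2\rfloor+1}{2}+\binom{\lceil L/2\rceil+1}{2}$ over the two load classes yields exactly the displayed formula.

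To upgrade this construction to a lower bound one runs the local exchange arguments familiar from Claim~10: (i) if $j\in A^c$ lies in a pure-$A^c$ block of $P$ while being anchored in $Q$, turning $j$ into a singleton in $P$ drops $\delta^{IH}$ by at least $1$ while leaving maximality intact, and (ii) if $j$ is anchored in both $P$ and $Q$, deleting it from one of the two anchored tails drops $\delta^{IH}$ by at least $1$. Hence every minimizer has no pure-$A^c$ block and no doubly-anchored $j$, so the minimization truly reduces to the tail-distribution problem whose optimum is the displayed formula. The main obstacle I expect is verifying that the chosen configuration actually has $D(P,Q)=k$, i.e.\ that no subset $A'\subset N$ larger than the chosen $A$ admits matching induced partitions: the pointwise condition $B_P(j)\cap A\neq B_Q(j)\cap A$ (writing $B_P(j),B_Q(j)$ for $j$'s $P$- and $Q$-blocks) rules out extensions of the form $A\cup\{j\}$ but not extensions obtained by adding several $A^c$ elements at once or by trading off against elements of $A$, and the construction must therefore be done in a way that leaves no large independent set in the underlying XOR-atom graph outside the chosen $A$.
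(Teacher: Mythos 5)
Your construction and the convexity/uniform-distribution argument mirror the paper's own proof of this claim, and your reduction of the objective to $\sum_{i\in A}\bigl(\binom{|J_P(i)|+1}{2}+\binom{|J_Q(i)|+1}{2}\bigr)$ with the balanced split is exactly the computation behind the displayed formula. However, the obstacle you flag at the end --- verifying that the configuration really has $D(P,Q)=k$ --- is not a routine check: it defeats the construction in this regime. Since $D(P,Q)=n-\max\{|A'|:P^{A'}=Q^{A'}\}$, you need $A$ to be a \emph{maximum} agreeing subset, not merely a maximal (non-extendable) one. When $2(n-k)<k$, every anchor $i\in A$ carries load $\lfloor k/(n-k)\rfloor\geq 2$, so after the balanced split both tails $J_P(i)$ and $J_Q(i)$ are non-empty. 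Pick $p_i\in J_P(i)$ and $q_i\in J_Q(i)$ for each $i$ and set $A''=\{p_i,q_i:i\in A\}$. Then $|A''|=2(n-k)>n-k$, and both $P^{A''}$ and $Q^{A''}$ are the discrete partition of $A''$: each $p_i$ is a singleton in $Q$, while its $P$-block $\{i\}\cup J_P(i)$ meets $A''$ only in $p_i$ because $i\notin A''$ and the tails are pairwise disjoint; symmetrically for $q_i$. Hence $D(P,Q)\leq n-2(n-k)=k-(n-k)<k$, so the constructed pair does not lie in the feasible set $\{(P,Q):D(P,Q)=k\}$ at all, and the minimization has been carried out over the wrong set.

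You should know that the paper's own proof has the same defect: it conflates maximal with maximum agreeing subsets, and its worked example ($P=\{134|26|5|7\}$, $Q=\{15|27|3|4|6\}$ on $n=7$, asserted to have $D(P,Q)=5$) actually has $D(P,Q)=3$, since $A'=\{3,5,6,7\}$ satisfies $P^{A'}=Q^{A'}$. So your instinct about where the difficulty lies is exactly right, but neither your argument nor the paper's resolves it. Any construction in which some anchor is ``active'' in both $P$ and $Q$ runs into the $A''$ obstruction above, and in the regime $k>2(n-k)$ the balanced split forces every anchor to be active in both; a correct proof would therefore need either a structurally different family of witnesses or a re-derivation of the claimed minimum value itself.
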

\begin{proof}
Again choose $P^A=P^A_{\bot}$ consisting of $|A|=n-k$ singletons or 1-cardinal blocks. Then, covering $A^c$ with atoms $[ij]$ or pairs $\{i,j\}$ such
that $i\in A,j\in A^c$ as indicated above entails that some elements $i\in A$ have to be atom-linked with more than two distinct elements $j,j'\in A^c$,
while every $j\in A^c$ still appears in precisely one atom. Making this as uniform as possible, every $i\in A$
appears in either $\left\lfloor\frac{k}{n-k}\right\rfloor$ or else $\left\lceil\frac{k}{n-k}\right\rceil$ atoms in the union of the two collections.
Then, the best every $i\in A$ can do for minimizing distance $\delta^{IH}(P,Q)$ while being atom-linked with either $\left\lfloor\frac{k}{n-k}\right\rfloor$
or else $\left\lceil\frac{k}{n-k}\right\rceil$ distinct $j\in A^c$, is splitting these $\left\lfloor\frac{k}{n-k}\right\rfloor$ or else
$\left\lceil\frac{k}{n-k}\right\rceil$ atoms equally, or as equally as possible, between $P$ and $Q$. That is,
$\left\lfloor\left\lfloor\frac{k}{n-k}\right\rfloor/2\right\rfloor$ or else
$\left\lfloor\left\lceil\frac{k}{n-k}\right\rceil/2\right\rfloor$ finer than $P$ but incomparable with $Q$, while the remaining
$\left\lceil\left\lfloor\frac{k}{n-k}\right\rfloor/2\right\rceil$ or else $\left\lceil\left\lceil\frac{k}{n-k}\right\rceil/2\right\rceil$ ones
incomparable with $P$ but finer than $Q$. Each of these four cardinalities $m\in\left\{\left\lfloor\left\lfloor\frac{k}{n-k}\right\rfloor/2\right\rfloor,
\left\lfloor\left\lceil\frac{k}{n-k}\right\rceil/2\right\rfloor,
\left\lceil\left\lfloor\frac{k}{n-k}\right\rfloor/2\right\rceil,
\left\lceil\left\lceil\frac{k}{n-k}\right\rceil/2\right\rceil\right\}$ corresponds to the formation of a block, either in $P$ or in $Q$, whose cardinality
is $m+1$, precisely because $m$ is a number of elements $j\in A^c$ to which a common $i\in A$ is joined through atoms
$[ij]$. Finally, the number of elements $i\in A$ appearing in $\left\lceil\frac{k}{n-k}\right\rceil$ atoms is
$k-(n-k)\left\lfloor\frac{k}{n-k}\right\rfloor$, while $(n-k)-\left[k-(n-k)\left\lfloor\frac{k}{n-k}\right\rfloor\right]$ is the number of
elements $i\in A$ appearing in $\left\lfloor\frac{k}{n-k}\right\rfloor$ atoms.
\end{proof}
\smallskip

For example, let $N=\{1,2,3,4,5,6,7\}$ and fix $A=\{1,2\}$ as a maximal subset such that $P^A=Q^A=\{1|2\}$. Then, $2(n-k)=2(7-5)=4<5=k$
and $1\in A$ has to be atom-linked with $\lceil\frac{7-2}{2}\rceil=3$ elements $j\in A^c$ while $2\in A$ has to be atom-linked with
$\lfloor\frac{7-2}{2}\rfloor=2$ elements $j\in A^c=\{3,4,5,6,7\}$ (or vice versa switching 1 and 2). On the other hand,
$1\in A$ divides these three atoms into two determining (through join) partition $P$ and the remaining one determining partition $Q$.
Similarly, $2\in A$ being involved in a even number of atoms, these latter can be divided equally between $P$ and $Q$. This means
\begin{eqnarray*}
P&=&[13]\vee[14]\vee[26]=[13]\vee[14]\vee[34]\vee[26]=\{134|26|5|7\}\text ,\\
Q&=&[15]\vee[27]=\{15|27|3|4|6\}\text .
\end{eqnarray*}
Hence $\delta^{IH}(P,Q)=s^P+s^Q-2s^{P\wedge Q}=4+2-0=6>5=7-2=D(P,Q)$. Conversely, if $A=\{1,2,3\}$ is a maximal subset with
$P^A=Q^A=\{1|2|3\}$, then $2(n-k)=2(7-4)=6>4=k$ and thus the situation is that of claim 11. Accordingly, partitions $P,Q$ may be
(for example) as follows:
\begin{eqnarray*}
P&=&[14]\vee[26]=\{14|26|3|5|7\}\text ,\\
Q&=&[15]\vee[37]=\{15|2|37|4|6\}\text ,
\end{eqnarray*}
yielding $\delta^{IH}(P,Q)=4=D(P,Q)$.

Finally, case $k=n-1$ is simple: conditions (10), (11) and $M^*_P\cap M^*_Q=\emptyset$ entail that in one of the two partitions,
say $P$, all $n-1$ elements $j\in A^c$
are atom-linked with the unique element $\{i\}=A$, entailing $P=P^{\top}$, while the other partition has to be $Q=P_{\bot}$. On the other hand,
$\delta^{IH}(P,Q)=\binom{n}{2}$ if and only if $P=P^{\top},Q=P_{\bot}$. Therefore,
\begin{equation*}
D(P,Q)=n-1\Leftrightarrow P=P^{\top},Q=P_{\bot}\Leftrightarrow\delta^{IH}(P,Q)=\binom{n}{2}\text .
\end{equation*}

For the upper bound all the above conditions (21), (22) and $M^*_P\cap M^*_Q=\emptyset$ remain valid, but $M^*_P\cup M^*_Q$
must be as large as possible. To achieve this, rather than distributing the needed atoms $[ij]\in M^*_P\cup M^*_Q, i\in A,j\in A^c$
in the most uniform way over the $n-k$ elements $i\in A$ as for the lower bound, it is now necessary to concentrate them as much as possible,
which is easy. 
\begin{claim}
$\underset{D(P,Q)=k}{\underset{P,Q\in\mathcal P^N}{\max}}\delta^{IH}(P,Q)=\binom{n}{2}-\binom{n-k}{2}$ for all $0\leq k\leq n-1$.
\end{claim}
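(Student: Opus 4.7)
The plan is to prove the upper bound by a direct pair-counting argument and then exhibit a pair attaining it. Both steps follow readily from the indicator form of the IH distance in (8), and I do not anticipate any real obstacle.

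For the upper bound, I view $\delta^{IH}(P,Q)$ through its squared-indicator form as a sum of $\binom{n}{2}$ contributions, one per atom $[ij]\in\mathcal P^N_1$, each lying in $\{0,1\}$. Given $P,Q$ with $D(P,Q)=k$, I fix a maximal $A\in 2^N$ with $P^A=Q^A$, so $|A|=n-k$. For any atom $[ij]$ with $\{i,j\}\subseteq A$, the equality of induced partitions yields $P\geqslant[ij]\Leftrightarrow P^A\geqslant[ij]\Leftrightarrow Q^A\geqslant[ij]\Leftrightarrow Q\geqslant[ij]$; hence $I_P([ij])=I_Q([ij])$ and the contribution of $[ij]$ is zero. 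Since there are $\binom{n-k}{2}$ atoms with both indices inside $A$, at most $\binom{n}{2}-\binom{n-k}{2}$ atoms can contribute a $1$, giving $\delta^{IH}(P,Q)\leq\binom{n}{2}-\binom{n-k}{2}$.

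For attainment, I fix $B\subseteq N$ with $|B|=n-k$ and take $P=P^{\top}$, $Q=\{B\}\cup P^{B^c}_{\bot}$. Since $P\geqslant Q$ we have $P\wedge Q=Q$, so (8) gives $\delta^{IH}(P,Q)=s^P+s^Q-2s^Q=\binom{n}{2}-\binom{n-k}{2}$. To verify $D(P,Q)=k$, observe that $P^{A'}=\{A'\}$ for every non-empty $A'\subseteq N$, while $Q^{A'}=\{A'\}$ requires either $A'\subseteq B$, or $|A'|=1$ with $A'\subseteq B^c$; in every case $|A'|\leq n-k$, and $A'=B$ achieves equality, so $D(P,Q)=n-|B|=k$. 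The only subtle point in the whole argument is the one-line observation that atoms supported inside the agreement subset $A$ are automatically matched between $P$ and $Q$; once this is noticed, the extremal pair is essentially forced by the heuristic of maximizing each remaining contribution: make every atom finer than $P$ by taking $P=P^{\top}$, and minimize $s^Q$ subject to the constraint that $Q$ has a block of size $n-k$ by taking $Q$ with only one non-trivial block.
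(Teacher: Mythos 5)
Your proposal is correct, and on the upper-bound direction it is actually tighter than the paper's own argument. The paper reaches the value $\binom{n}{2}-\binom{n-k}{2}$ by the same extremal pair you use ($P=P^{\top}$, $Q=\{A\}\cup P^{A^c}_{\bot}$ with $|A|=n-k$), but for the claim that nothing larger is achievable it only gestures at the earlier machinery — the maximal atom collections $M^*_P,M^*_Q$ with $M^*_P\cap M^*_Q=\emptyset$ and conditions (10),(11) — and asserts that ``it seems rather evident'' no larger count is possible. You replace that appeal with a clean per-atom counting argument: since $\delta^{IH}(P,Q)$ is a sum of $\binom{n}{2}$ contributions in $\{0,1\}$, and every atom $[ij]$ with $\{i,j\}\subseteq A$ satisfies $I_P([ij])=I_Q([ij])$ because $P^A=Q^A$, at most $\binom{n}{2}-\binom{n-k}{2}$ atoms can contribute. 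This needs only the existence (not even the maximality) of an agreement set of size $n-k$, which is exactly what $D(P,Q)=k$ supplies, and it is independent of the representation-by-atoms apparatus of claims 8--9. Your verification that $D(P^{\top},\{B\}\cup P^{B^c}_{\bot})=k$ is also spelled out more explicitly than in the paper, which simply takes it for granted. In short: same extremal witness, but your upper-bound argument is more elementary and more rigorous than the one in the text.
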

\begin{proof}
Let $A\in 2^N,|A|=n-k$ be a maximal subset such that $P^A=Q^A$. Again, if $k=0$, then $P=Q$ and there is nothing to show.
Otherwise, for $0<k\leq n-1$, choose $P=P^{\top},Q=\{A\}\cup P^{A^c}_{\bot}$, entailing $P^A=\{A\}$. Then,
$\delta^{IH}(P,Q)=s^P+s^Q-2s^Q=s^P-s^Q=\binom{n}{2}-\binom{n-k}{2}$. It seems rather evident that there is no way
of satisfying (10), (11) and mostly $M^*_P\cap M^*_Q=\emptyset$ while involving a larger number of atoms. 
\end{proof}
\smallskip

Note that this upper bound attains on pairs of comparable modular elements, and on such pairs the IH and SB distances coincide. In fact,
at each level $\mathcal P^N_k,0\leq k<n$ of the partition lattice the size function attains its maximum precisely on modular elements:
$\underset{P\in\mathcal P^N_k}{\max}$ $s^P=\binom{k+1}{2}$. 

\subsection{Constrained bounds}
In these lower and upper bounds considered above the cardinality $n-k$ of a maximal subset $A\in 2^N$ where the two generic partitions
$P,Q$ coincide is fixed, while the form of $P^A=Q^A$ is chosen arbitrarily. In fact, for the lower bound the choice is $P^A=P^A_{\bot}$
and for the upper one it is $P^A=\{A\}$. Accordingly, the constrained version of these bounding problems also fixes $P^A$, through its
class $c^{P^A}=\left(c^{P^A}_1,\ldots ,c^{P^A}_{n-k}\right)$. Considering such a version may be useful for further seeing in detail how
many distinct values are actually taken by $\delta^{IH}(\cdot,\cdot)$ for each value $k,0\leq k<n$ of partition-distance
$D(\cdot,\cdot)$.

By claim 12 above, determining the constrained upper bound is simple:
\begin{equation*}
\underset{\underset{\underset{P^A\text{ FIXED}}{D(P,Q)=k}}{P,Q\in\mathcal P^N}}{\max}
\delta^{IH}(P,Q)=\underset{\underset{B\neq B'}{B,B'\in P^A}}{\max}\left(\binom{|B|+k}{2}-\binom{|B|}{2}+\binom{|B'|+k}{2}-\binom{|B'|}{2}\right)
\end{equation*}
with $k=|A^c|$. That is, partition $P$ chooses a largest block $B\in P^A$ and obtains as
$P=\left(P^A\cup P^{A^c}_{\bot}\right)\underset{\underset{i\in B,j\in A^c}{[ij]\in\mathcal P^N_1}}{\vee}[ij]$, while $Q$
chooses a largest block $B'\in P^A\backslash B$ and obtains as
$Q=\left(P^A\cup P^{A^c}_{\bot}\right)\underset{\underset{i\in B',j\in A^c}{[ij]\in\mathcal P^N_1}}{\vee}[ij]$. Accordingly, the distance is the sum of
$s^P-s^{P\wedge Q}=\binom{|B|+k}{2}-\binom{|B|}{2}$ and $s^Q-s^{P\wedge Q}=\binom{|B'|+k}{2}-\binom{|B'|}{2}$.

Like in the free version of the problem, determining the constrained lower bound is less simple, in general. Still, an immediate adaptation of 
claim 10 to this more general situation is: if $k\leq 2c^{P^A}_1$, then
$\underset{\underset{\underset{P^A\text{ FIXED}}{D(P,Q)=k=|A^c|}}{P,Q\in\mathcal P^N}}{\min}\delta^{IH}(P,Q)=k$.\\
The lower bound for \textit{all} other cases where $k>2c^{P^A}_1$ clearly cannot be approached by considering separatley all possible classes $c^{P^A}$
of partitions of a $n-k$-set, with $n$ arbitrarily large and $0<k<n$. Conversely, what seems interesting is an algorithmic view of the
problem. In particular, the sought lower bound may be determined through a greedy construction of a bipartite graph $G=(V\times V',E)$ where
$V=P^A,V'=A^c,E\subseteq V\times V'$. In words, vertex subset $V$ contains all blocks of $P^A$, vertex subset $V'$ contains all elements
$j\in A^c$ and any edge $(B,j)\in E$ links a block $B\in P^A$ and an element $j\in A^c$. To see how this relates to the constrained bounding problem,
firstly let the graph $G_{\emptyset}=(V\times V',\emptyset)$ with empty edge set correspond to the initial situation where
$P_0=P^A\cup P^{A^c}_{\bot}=Q_0$, with $P>P_0,Q>Q_0$ denoting the two partitions to be constructed by adding edges and such that, eventually,
$\delta^{IH}(P,Q)$ is the sought lower bound. Now consider adding edges one after the other, while conceiving edge set $E=E^P\overset{\cdot}{\cup}E^Q$
as partitioned into two blocks $E^P,E^Q$ corresponding to partitions $P,Q$. This yields a sequence
$G_0=G_{\emptyset},G_1=(V\times V',E_1^P\overset{\cdot}{\cup}E_1^Q),\ldots ,G_m=(V\times V',E_m^P\overset{\cdot}{\cup}E_m^Q),\ldots$
of bipartite graphs. In particular, let the sequence of edges progressively added one after the other be such that all links added at an odd step
$m=1,3,5,\ldots$ are in $E^P$, while all links added at an even step $m=2,4,6,\ldots$ are in $E^Q$. Finally, the main rule for the construction is the following:
at the end, every vertex $j\in A^c$ has to be the end-vertex of \textit{precisely one} added edge. Evidently, this means that the above sequence terminates exactly
at the $k$-th step, where $k=|A^c|$, and allows for some blocks $B\in P^A$ to remain isolated vertexes in the final graph.

At any intermediate step $m,0<m<k$, graph $G_m=(V\times V',E_m^P\overset{\cdot}{\cup}E_m^Q)$
identifies the two (not yet final) partitions $P_m,Q_m$ as follows:
\begin{equation*}
P_m=P_0\underset{\underset{i\in B}{(B,j)\in E^P_m}}{\vee}[ij]\text{ as well as }
Q_m=Q_0\underset{\underset{i'\in B'}{(B',j)\in E^Q_m}}{\vee}[i'j]\text .
\end{equation*}
Define $e_{G_m}^P,e_{G_m}^Q:V\rightarrow\mathbb Z_+$ by
\begin{equation*}
e_{G_m}^P(B)=\big|\big\{(B,j):(B,j)\in E^P_m\big\}\big|\text{ and }
e_{G_m}^Q(B)=\big|\big\{(B,j):(B,j)\in E^Q_m\big\}\big|
\end{equation*}
for every $B\in V$. Then, $\delta^{IH}(P_m,Q_m)=s^{P_m}+s^{Q_m}-2s^{P_0}=$
\begin{equation*}
\sum_{B\in V}\left(\binom{|B|+e_{G_m}^P(B)}{2}+\binom{|B|+e_{G_m}^Q(B)}{2}-2\binom{|B|}{2}\right)\text .
\end{equation*}

Now, for any sequence $G_0,G_1,\ldots ,G_m,0\leq m<k$ of bipartite graphs as above, define a weight function
$w_{m+1}(\cdot)$ (over edges) as follows: if $m=0$ or $m$ is even, then $w_{m+1}:(V\times V')\backslash E_m^P\rightarrow\mathbb N$ assigns an integer weight to
every edge $(B,j)\not\in E_m^P$ by
\begin{equation*}
w_{m+1}((B,j))=\binom{|B|+1+e_{G_m}^P(B)}{2}\text ,
\end{equation*}
while if $m$ is odd, then $w_{m+1}:(V\times V')\backslash E_m^Q\rightarrow\mathbb N$ assigns an integer weight to
every edge $(B,j)\not\in E_m^Q$ by
\begin{equation*}
w_{m+1}((B,j))=\binom{|B|+1+e_{G_m}^Q(B)}{2}\text .
\end{equation*}
This enables to construct a $k+1$-sequence of graphs $G_0,G_1^*,\ldots G^*_m,\ldots ,G^*_k$ in a greedy fashion, that is, by adding at each step
$m,0<m\leq k$ an edge with minimum weight as given by weight function $w_m(\cdot)$. Then, the sought lower bound is
\begin{equation*}
\sum_{B\in V}\left(\binom{|B|+e_{G^*_k}^P(B)}{2}+\binom{|B|+e_{G^*_k}^Q(B)}{2}-2\binom{|B|}{2}\right)\text ,
\end{equation*}
and every vertex $B\in V$ which remains isolated
in the final graph $G^*_k$ has $e_{G^*_k}^P(B)=0=e_{G^*_k}^Q(B)$, entailing that the corresponding term in the summation simply vanishes.

%%%%    SECTION 8    ::::    CONCLUDING REMARKS

\section{Concluding remarks}
Quantifying differences between partitions is needed in statistics, where partitions are clusterings and blocks are clusters. On the other hand,
the partition lattice is very important in lattice theory, where it appears to be the main example of an indecomposable geometric lattice.
While the Hamming distance between subsets may be extended to any distributive lattice, how to measure differences between elements of
geometric lattices seems disregarded in combinatorial theory. This paper addresses the issue from alternative perspectives, and in general shows that
any monotone lattice function, such as the rank, may be used for constructing a distance measure.

Among the measures considered, the IH distance clearly is the analog of the Hamming distance between subsets. It obtains by focusing on atoms and through the
size function. In particular, the size and the rank of a partition are, respectively, the maximum and the minimum number of atoms whose join yields that
partition; their difference maximally is $s^{P^{\top}}-r(P^{\top})=\binom{n-1}{2}$. Conversely, for every subset there is a unique number of atoms whose
join (union) yields that subset, and this number is the rank of the subset.

Given that a variety of distance measures is considered, comparing them seems natural as well as useful, and to this end any distance measure
$\delta(\cdot,\cdot)$ may be $[0,1]$-normalized as $\hat\delta(\cdot,\cdot)=1-[1/(1+\delta(\cdot,\cdot))]$. In this view, the larger the range
(or image) $\underset{(P,Q)\in\mathcal P^N\times\mathcal P^N}{\cup}\hat\delta(P,Q)=R(\hat\delta)\subset[0,1]$ of a normalized
distance, the more precise and granular this latter is. Pushing the comparison into a ranking, the less attractive (normalized) distances appearing above
are those satisfying modularity and co-maximality, hence the rank-based $\hat\delta^{RB}$ distance and the size-based $\hat\delta^{SB}$ one. Apart from
their range, these two measures are not able to appreciate that non-modular partitions have many complements, some of which (strictly) coarser than
others \cite{Stanley71}, and this is a main flaw. Next come the modified RB distance $\hat\delta^{RB}_+$ and partition-distance $\hat D$; their range is
rather small if the aim is at distinguishing between all possible differences between partitions. The SD distance $\hat\delta^{SD}$ has a larger range, but still
smaller than IH distance $\hat\delta^{IH}$.

From a final perspective, determining $D(P,Q)$ for generic partitions $P,Q$ is a computational problem whose solution requires polynomial time
\cite[theorem 2.1, p. 160]{Gusfield02}. On the other hand, if prepared to use binary $\binom{n}{2}$-arrays as data structures, then IH distance is
$\delta^{IH}(P,Q)=\langle I_P,I_Q\rangle$, where $\langle\cdot,\cdot\rangle$ denotes scalar product while $I_P:\mathcal P^N_1\rightarrow\{0,1\}$ is
the indicator function (or binary $\binom{n}{2}$-array representation)
of partitions $P\in\mathcal P^N$ introduced above.

\bibliographystyle{abbrv}
\bibliography{biblioPartitionDistances}

\end{document}